\crefname{section}{\S}{\S\S}
\newtheorem{claim}{Claim}
\newtheorem{lemma}{Lemma}
\newtheorem{corollary}{Corollary}
\newtheorem{claim-subsection}{Claim}[subsection]
\newtheorem{definition}{Definition}
\newtheorem{theorem}{Theorem}
\newtheorem*{rep@theorem}{\rep@title}
\newcommand{\newreptheorem}[2]{%
\newenvironment{rep#1}[1]{%
 \def\rep@title{#2 \ref{##1}}%
 \begin{rep@theorem}}%
 {\end{rep@theorem}}}
\newcommand{\tsref}[1]{\textsection\ref{#1}\xspace}
\newcommand{\myparagraph}[1]{\vspace{1mm}\noindent\textbf{#1.}}
\newcommand{\proofparagraph}[1]{\vspace{1mm}\noindent\textit{#1.}}
\newenvironment{proofsketch}[1][Proof sketch]{\noindent\textit{#1.} }{\hfill $\Box$\\[1mm]}
\newenvironment{remark}[1][Remark]{\noindent\textit{#1.} }{\hfill $\Box$\\[1mm]}
\setlist{nolistsep}
\newcommand{\remove}[1]{}
\newcommand{\ack}{\textit{ack}}
\newcommand{\CAS}{\mbox{\textit{CAS}}}
\newcommand{\True}{\mbox{\texttt{true}}}
\newcommand{\False}{\mbox{\texttt{false}}}
\newcommand{\op}[1]{\mbox{\sc #1}}
\newcommand{\fai}{\mbox{\sc F\&I}}
\newcommand{\readop}{\mbox{\sc Read}}
\newcommand{\writeop}{\mbox{\sc Write}}
\newcommand{\setop}{\mbox{\sc Set}}
\newcommand{\readopA}{\mbox{\sc Read1}}
\newcommand{\readopB}{\mbox{\sc Read2}}
\newcommand{\scan}{\mbox{\sc Scan}}
\newcommand{\update}{\mbox{\sc Update}}
\newcommand{\draf}{\texttt{DRAF}}
\newcommand{\sticky}{\texttt{Sticky-Bit}}
\newcommand{\drr}{\texttt{Double-Write-Register}}
\newcommand{\snapshot}{\texttt{Snapshot}}
\title{Separation and Equivalence results for the Crash-stop and Crash-recovery Shared Memory Models}
	\author{
			Ohad Ben-Baruch\\
			Ben-Gurion University, Israel\\
			ohadben@post.bgu.ac.il
			\and
			Srivatsan Ravi\\
			University of Southern California, USA\\
			srivatsr@usc.edu
		}
\begin{document}

\maketitle
\begin{abstract}
Linearizability, the traditional correctness condition for concurrent data structures is considered insufficient for the non-volatile shared memory model where processes recover following a crash. For this crash-recovery shared memory model, strict-linearizability is considered appropriate since, unlike linearizability, it ensures operations that crash take effect prior to the crash or not at all. 
This work formalizes and answers the question of whether an implementation of a data type derived for the crash-stop shared memory model is also strict-linearizable in the crash-recovery model.

This work presents a rigorous study to prove how helping mechanisms, typically employed by non-blocking implementations, is the algorithmic abstraction that delineates linearizability from strict-linearizability. Our first contribution formalizes the crash-recovery model and how explicit process crashes and recovery introduces further dimensionalities over the standard crash-stop shared memory model. We make the following technical contributions that answer the question of whether a help-free linearizable implementation is strict-linearizable in the crash-recovery model: (i) we prove surprisingly that there exist linearizable implementations of object types that are help-free, yet not strict-linearizable; (ii) we then present a natural definition of help-freedom to prove that any obstruction-free, linearizable and help-free implementation of a total object type is also strict-linearizable.
The next technical contribution addresses the question of whether a strict-linearizable implementation in the crash-recovery model is also help-free linearizable in the crash-stop model. To that end, we prove that for a large class of object types, a non-blocking strict-linearizable implementation cannot have helping. 
Viewed holistically, this work provides the first precise characterization of the intricacies in applying a concurrent implementation designed for the crash-stop (and resp. crash-recovery) model to the crash-recovery (and resp. crash-stop) model.     
\end{abstract}
\newpage
\tableofcontents
\section{Introduction}
\label{sec:intro}
Concurrent data structures for the standard volatile shared memory model typically adopt linearizability as the traditional safety property~\cite{Her91}. However, in the non-volatile shared memory model where processes \emph{recover} following a crash, linearizability is considered insufficient since it allows object operations that crash to take effect anytime in the future. In the crash-recovery model~\cite{golab15}, linearizability is strengthened to force crashed operations to take effect before the crash or not take effect at all, so-called \emph{strict-linearizability}~\cite{AF03}. While there exists a well-studied body of linearizable data structure implementations in the crash-stop model~\cite{HS08-book}, concurrent implementations in the crash-recovery model are comparatively nascent. Consequently, it is natural to ask: under what conditions is a linearizable implementation in the crash-stop also strict-linearizable in the crash-recovery model? 

Non-blocking implementations in the crash-stop model employ \emph{helping}: 
i.e., apart from completing their own operation, processes perform additional
work to help \emph{linearize} concurrent operations and make progress. This helping mechanism enables an operation invoked by a process $p_i$ to be linearized by the event performed of another process $p_j$, but possibly after the crash of $p_i$. However, strict-linearizability stipulates that the operation invoked by $p_i$ be linearized before the crash event. Intuitively, this suggests that linearizable implementations that are \emph{help-free} must be strict-linearizable (also conjectured in \cite{golab15}). This work formalizes and answers this precise question: whether a help-free implementation of a data type derived for the crash-stop model can be used \emph{as it is} in the crash-recovery model.

Precisely answering this question necessitates the formalization of the crash-recovery shared memory model. Explicit process crashes introduces further dimensionalities to the set of executions admissible in the crash-recovery model over the well formalized crash-stop shared memory~\cite{AW98}. Processes may crash on an individual basis, i.e., an event in the execution corresponds to the crash of a single process (we refer to this as the \emph{individual crash-recovery model}). An event may also correspond to $m$ ($1 < m \leq n$), process crashes where $n$ is total number of processes participating in the concurrent implementation (when $m= n$ it is the \emph{full-system crash-recovery model}). Following a crash event in this model, the local state of the process is reset to its initial state when it recovers and restarts an operation assuming the \emph{old identifiers crash-recovery model} (and resp. \emph{new identifiers crash-recovery model}) with the original process identifier (and resp. new process identifier). Our contributions establish equivalence and separation results for crash-stop and the identified crash-recovery models, thus providing a precise characterization of the intricacies in applying a concurrent implementation designed for thecrash-stop model to the crash-recovery model, and vice-versa. 
%crash-stop (and resp. crash-recovery) model to the crash-recovery (and resp. crash-stop) model.     
%\myparagraph{Roadmap} 

\subsection{Contributions}
%
%Our first contribution is defining the crash-recovery model and its characteristics.
First, we define the crash-recovery model and its characteristics. We show that there exist sequential implementations of object types in the crash-stop model that as is have inconsistent sequential specifications in the old identifiers crash-recovery model.

We then consider how data structures use helping in the crash-stop model by adopting the definitions of \emph{linearization-helping}~\cite{help15} and \emph{universal-helping}~\cite{DBLP:jattiya-help}. When considering an execution with two concurrent operations, 
the linearization of these operations dictates which operation take effect first. The definition of linearization-helping considers a specific event $e$, in which it is \emph{decided}
which operation is linearized first. In an implementation that does not have linearization-helping, $e$ is an event by the process whose operation is decided to be the
one that comes first. 
%\textcolor{blue}{Need to fix below}
Universal-helping requires that the progress of some processes eventually
ensures that all pending invocations are linearized, thus forcing a process to ensure concurrent operations of other processes are eventually linearized.
%More specifically, an implementation has universal-helping if for every execution $E$, for every long enough extension of it, all pending operations in $E$ are linearized, and that linearization order does not change due to any future steps.

The first technical contribution of this paper is proving that some pairs of conditions are incomparable. That is, an implementation can satisfy exactly one of them, both, or none.
\begin{itemize}
    \item linearization-helping vs. universal-helping
    \item strict-linearizability vs. linearization-helping
    \item strict-linearizability vs. universal-helping
\end{itemize}

The second technical contribution is to show that under certain restrictions there is a correlation between some of the above pairs.
\begin{itemize}
    \item Restricting the definition of linearization-helping to be \emph{prefix-respecting}, we prove that linearization-help free implies strict-linearizability.
    More specifically, any \emph{obstruction-free} implementation of a \emph{total} object type that is linearizable and has no linearization-helping in the crash-stop model is also strict-linearizable in the new identifiers individual crash-recovery model (Lemma~\tsref{lm:prefix}).
    \item We prove that any non-blocking implementation of an \emph{order-dependent} type that is strict-linearizable in the crash-recovery model has no universal-helping in the crash-stop model (Lemma~\tsref{lm:oo-eq}).
\end{itemize}

\remove{
As we show, linearization-helping is incomparable to universal-helping (Lemma~\tsref{lm:help1}), i.e., there exists an implementation of a data type that satisfies linearization-helping (resp. universal-helping), but not universal-helping (resp. linearization-helping).

The first technical contribution of this paper presents results on whether linearizable implementations that are help-free \emph{as is} also strict-linearizable in the crash-recovery model. %
\begin{itemize}
    \item We present a linearizable implementation of a \emph{sticky-bit} object that does not satisfy linearization-helping, yet is not strict-linearizable in the full-system crash-recovery model (Lemma~\tsref{lm:sticky}). This result is surprising and made possible because linearization-helping permits some unintuitive linearizations: it may linearize operations of some history $H$ in different order for different extensions of $H$.
    \item Restricting the definition of linearization-helping to be \emph{prefix-respecting}, we prove that any \emph{obstruction-free} implementation of a \emph{total} object type that is linearizable and has no linearization-helping in the crash-stop model is also strict-linearizable in the new identifiers individual crash-recovery model (Lemma~\tsref{lm:prefix}).
    \end{itemize}
The second technical contribution addresses the question of whether a strict-linearizable implementation in the crash-recovery model is also universal-help free linearizable in the crash-stop model.
\begin{itemize}
    \item We prove that any non-blocking implementation of an \emph{order-dependent} type that is strict-linearizable in the system-wide crash-recovery model has no universal-helping in the crash-stop model (Lemma~\tsref{lm:oo-eq}). Informally, order-dependent types (e.g., queues and stacks) concerns types with two operations $\pi_1$, $\pi_2$ and an infinite history sequence $H$ such that inserting $\pi_1$ and $\pi_2$ at any point in $H$ affects some response in $H$, and additionally having $\pi_1\cdot \pi_2 \cdot H$ and $\pi_2\cdot \pi_1 \cdot H$ also effects the response of some operation in $H$.
    \item We show that we can modify the (crash-stop) shared memory \emph{snapshot} algorithm~\cite{AADGMS93} to derive strict-linearizable wait-free implementation in the system-wide crash-recovery model that satisfies both linearization-helping and valency-helping (Lemma~\tsref{lm:snapshot}).
    
\end{itemize}
}

\myparagraph{Roadmap} The contributions in this paper are structured as follows: \tsref{sec:prel} introduces the crash-stop shared memory model and other preliminaries. \tsref{sec:model} presents our characterization of the dimensionalities of the crash-recovery shared memory model. \tsref{sec:help} recalls universal-helping, linearization-helping, valency-helping and presents new results on implementations satisfying these definitions. \tsref{sec:SL-LH} discuss the correlation between strict-linearizable implementations and linearization-helping. \tsref{sec:HelpFree-SL} proves that help-freedom does not implies strict-linearizability in general, but under a natural definition of help-freedom it does follows. \tsref{sec:SL-UH} proves that strict-linearizability and universal-helping are independent. However, for a large class of objects, strict-linearizability implies universal-help freedom. Finally, \tsref{sec:SL-VH} discuss the relation between strict-linearizability and valency-helping. The paper is concluded
with a short discussion in \tsref{sec:disc}.

%\tsref{sec:tc2} presents our results that answer the question: is a linearizable help-free implementation of a data type in the crash-stop model also strict-linearizable in the crash-recovery model? \tsref{sec:tc1} presents our results to the converse question: does a strict-linearizable implementation in the crash-recovery model imply a help-free linearizable implementation in the crash-stop model? Appendix~\tsref{sec:appmain} presents results on order-dependent types and the closely related exact-order types~\cite{help15}.  

\subsection{Related work}
Strict-linearizability was proposed by Aguilera et al.~\cite{AF03} who show that it precludes wait-free implementations of multi-reader single-writer registers from single-reader single-writer registers. \cite{golab15} showed that this is in fact possible with linearizability thus yielding a separation between the crash-stop and crash-recovery models. That helping mechanisms, typically employed by non-blocking implementations, is the algorithmic abstraction that may delineate linearizability from strict-linearizability was also conjectured in \cite{golab15}. This is the first work to conclusively answer this question by providing the first precise characterization of the intricacies in applying a shared memory concurrent implementation designed for the crash-stop (and resp. crash-recovery) model to the crash-recovery (and resp. crash-stop) model.

Censor-Hillel et al.\cite{help15} formalized linearization-helping and showed that without it, certain objects called \emph{exact-order} types lack wait-free
linearizable implementations (assuming only read, write, compare-and-swap, fetch-and-add primitives) in the standard crash-stop shared memory model. Universal-helping and valency-helping were defined by Attiya et al.~\cite{DBLP:jattiya-help}. Informally, it was shown in \cite{DBLP:jattiya-help} that a non-blocking $n$-process linearizable implementation of a queue or a stack with universal-helping can be used to solve $n$-process consensus. This result was also extended to \emph{strong-linearizability}~\cite{golab-strong} which requires that once an operation is linearized, its linearization order cannot be changed in the future. The definition of strong-linearizability does bear resemblance with the proposed helping definitions in \cite{help15,DBLP:jattiya-help}; however, it is defined as restriction of linearizability and is incomparable to helping. Indeed, \cite{help15} makes the observation that strong-linearizability is incomparable with linearization-helping. The results in this paper study the implications of the universal, linearization and valency helping definitions for strict-linearizability in the crash-recovery, which has not been studied carefully thus far.

\section{Crash-stop Model}
\label{sec:prel}
This section presents the preliminaries of the standard \emph{volatile} shared memory model in which processes stop participating following a crash.

\myparagraph{Processes and shared memory}
We consider an asynchronous shared memory system in which a set of $\mathbb{N}$ processes communicate by applying \emph{operations} on shared \emph{objects}.
Each process $p_i;i\in \mathbb{N}$ has an unique identifier and an initial state.
An object is an instance of an \emph{abstract data type} which specifies a set of operations that provide the only means to
manipulate the object.
An \emph{abstract data type} $\tau$ defines a set of operations, a set of responses, a set of states, an initial state and a transition relation that determines, for each state
 and each operation, the set of possible  resulting states and produced responses~\cite{AFHHT07}. 
 %Here, $(q,\pi,q',r) \in \delta$ implies that when
 %an operation $\pi \in \Phi$ is applied on an object of type $\tau$
 %in state $q$, the object may move to state $q'$ and return a response $r$.
 We consider only \emph{deterministic} types: when
 an operation $\pi$ is applied on an object of type $\tau$
 in state $q$, there is exactly one state $q'$ to which the object can move to and exactly one matching response $r$.
 An object type is \emph{total} if any operation of the object type applied by \emph{any} process (we assume that the identifier of the process does not matter) is well-defined for every object state.
 An example for an object that is not total can be drawn by restricting known objects. For example, consider a stack where process $p$ is allowed to perform $Push(i)$ only after completing $Push(i-1)$. In such object, a process is not allowed to invoke $Push(2)$ as its first operation. 
 %\emph{Mutual exclusion} is another example of an object type that is not total. 
 However, to the best of our knowledge, most objects types are total.

An \emph{implementation} of an object type (sometimes we just say object) $\tau$ provides a specific data-representation of $\tau$ by 
applying \emph{primitives} on a set of 
shared \emph{base objects} $b_1, b_2, \ldots$, each of which is assigned an initial value and a set of algorithms $I_1(\tau),\ldots , I_n(\tau)$,
one for each process.
We assume that the primitives applied on base objects are \emph{deterministic}.
A primitive is a generic \emph{read-modify-write} (\emph{rmw}) procedure applied to a base object~\cite{G05,Her91}.
It is characterized by a pair of functions $\langle g,h \rangle$:
given the current state of the base object, $g$ is an \emph{update function} that
computes its state after the primitive is applied, while $h$ 
is a \emph{response function} that specifies the outcome of the primitive returned to the process.
Let $e$ be an event issued by some transaction that applies the rmw primitive $\langle g,h \rangle$ to a base object $b$ after an execution $E$. Let $v$ be the value of $b$ after $E$.
Now, $e$ atomically performs the following: it updates the value of $b$ to the value specified  by the function $g$ and returns a response specified by the function $h$.
%\ohad{what is $\iota \in I$ ? should it be $g(v,e)$? and a response $h(v,e)$?}
%A RMW primitive is \emph{trivial} if it never changes the value of the base object to which it is applied.
%Otherwise, it is \emph{nontrivial}.

%\myparagraph{Processes.}
%An RMW primitive $\langle g,h \rangle$ is \emph{conditional} if there exists $v$, $w$ such that
%$g(v,w)=v$ and there exists $v$, $w$ such that
%$g(v,w)\neq v$~\cite{cond-04}.
%For \emph{e.g}, \emph{compare-and-swap (CAS)}
%and \emph{load-linked/store-conditional (LL/SC} are nontrivial conditional RMW primitives
%while \emph{fetch-and-add} is an example of a nontrivial RMW primitive that is not conditional.

\myparagraph{Executions and configurations}
%An \emph{execution fragment} is a finite or infinite sequence of \emph{events}.
An \emph{event} of a process $p_i$ in the crash-stop model (sometimes we say \emph{admissible step} of $p_i$)
is an invocation or response of an operation performed by $p_i$ or a 
rmw primitive applied by $p_i$ to a base object
along with its response. 
%$r$ and write $(b, \langle g,h\rangle, r,i)$).
A \emph{configuration} specifies the value of each base object and 
the state of each process.
The \emph{initial configuration} is the configuration in which all 
base objects have their initial values and all processes are in their initial states.

An \emph{execution fragment} is a (finite or infinite) sequence of events.
An \emph{execution} of an implementation $I$ is an execution
fragment where, starting from the initial configuration, each event is
issued according to $I$ and each response of a rmw event matches the state of $b$ resulting from all preceding events.
An execution $E\cdot E'$, denoting the concatenation of $E$ and $E'$,
is an \emph{extension} of $E$ and we say that $E'$ \emph{extends} $E$.
Let $E$ be an execution fragment.
For every process identifier $k$,
$E|k$ denotes the subsequence of $E$ restricted to events of
process $p_k$.
%Let $\ms{inv}(op_m)$ denote the invocation of some t-operation $op_m$.
If $E|k$ is non-empty,
%and $H|k \neq \ms{inv}(op_m) \cdot A_k \vee \TryC_k() \cdot C_k$,
we say that $p_k$ \emph{participates} in $E$, else we say $E$ is \emph{$p_k$-free}.
An operation $\pi$ \emph{precedes} another operation $\pi'$ in an execution
$E$, 
denoted $\pi \rightarrow_{E} \pi'$, 
if the response of $\pi$ occurs before the invocation of $\pi'$ in $E$.
Two operations are \emph{concurrent} if neither precedes
the other. 
%An execution is \emph{rw-sequential} if every invocation of a read or write operation is immediately followed by a response event.
An execution is \emph{sequential} if it has no concurrent 
operations. 
Two executions $E$ and $E'$ are \emph{indistinguishable} to a set $\mathcal{P}$ of processes, if
for each process $p_k \in \mathcal{P}$, $E|k=E'|k$.
An operation $\pi_k\in ops(E)$ is \emph{complete in $E$} if
it returns a matching response in $E$.
%ends with a response event.
Otherwise we say that it is \emph{incomplete} or \emph{pending} in $E$.
We say that an execution $E$ is \emph{complete} if every invoked operation is complete in $E$.

\myparagraph{Well-formed executions} In the crash-stop model, we assume that executions are \emph{well-formed}:
no process invokes a new operation before
the previous operation returns.
Specifically, we assume that for all $p_k$, $E|k$ begins with the invocation of an operation, is
sequential and there is no event between a matching response event and the subsequent following invocation.

\myparagraph{Safety property: Linearizability}
A \emph{history} $H$ of an execution $E$ is the subsequence of $E$ consisting of all
invocations and responses of operations.
Histories $H$ and $H'$ are \emph{equivalent} if for every process
$p_i$, $H|i=H'|i$.
A complete history $H$ is \emph{linearizable} with 
respect to an object type $\tau$ if there exists
a sequential history $S$ equivalent to $H$ such that
(1) $\rightarrow_{H}\subseteq \rightarrow_S$ and
(2) \emph{$S$ is consistent with the sequential specification of type $\tau$}.
A history $H$ is linearizable if it can be
\emph{completed} (by adding matching responses to a subset of
incomplete operations in $H$ and removing the rest)
to a linearizable history~\cite{HW90,AW04}.

%\myparagraph{Liveness}

\section{Characterization of the Crash-recovery Model}
\label{sec:model}
%\subsection{Atomic memory model}
%
\myparagraph{Processes and non-volatile shared memory}
We extend the crash-stop model defined in \tsref{sec:prel} by allow any process $p_i$ to \emph{fail by crashing}; following a crash, process $p_i$ does not take any steps until the invocation of a new operation. Following a crash, the state of the shared objects remains the same as before the crash; however, the local state of crashed process is set to its initial state.

%\ohad{the use of $\mathbb{N}$ is confusing. If we start with N processes, and one fails and get a new id, how can it be this id is $k \in \mathbb{N}$ and it is both a new identifier and at most N process are in the system}

\myparagraph{Executions and configurations}
An \emph{event} of a process $p_i$ in the crash-recovery model is any step admissible in the crash-stop model as well as a special $\bot_{\mathbb{P}}$ crash step; $\mathbb{P}$ is a set of process identifiers.
The $\bot_{\mathbb{P}}$ step performs the following actions: 
%\begin{enumerate}
(i) for each $i\in \mathbb{P}$, the local state of $p_i$ set to its initial state, 
%(ii) for each $p_i\in \mathbb{P}$,
(ii) the execution $E_1\cdot \bot_{\mathbb{P}}\cdot E_2$; $E_2$ is $\mathbb{P}$-free, is indistinguishable to every process $j \notin \mathbb{P}$ from the execution $E_1\cdot E_2$. In other words, processes are not aware to crash events.
%\end{enumerate}
%\textcolor{blue}{Must state the shared object state stays the same even after the crash}
%
%\myparagraph{Persistent memory model}
%\textcolor{blue}{We assume that a process is recovered by executing a new operation on some object, not necessarily the object on which it was operating when the crash occurred. That is, we do not assume a model in which upon a recovery a process first executes a recovery function, which is in charge to first fix any inconsistencies caused by the crash before proceeding to the next operation.}
%This work assumes the \emph{atomic} persistency memory model

\myparagraph{Process crash model}
We say that an execution $E$ is admissible in the \emph{individual crash-recovery model} if for any event $\bot_{\mathbb{P}}$ in $E$, $|\mathbb{P}|=1$.
%We say that an execution $E$ is admissible in the \emph{group-process crash model} if for any event $\bot_{\mathbb{S}}$ in $E$, $|S|>1$. 
If $|\mathbb{P}|=\mathbb{N}$, we refer to it as the \emph{system-wide crash-recovery model}.
We say that an implementation $I$ is admissible in the \emph{individual crash-recovery model} (resp. \emph{system-wide crash-recovery model}) if every execution of $I$ is admissible in the \emph{individual crash-recovery model} (resp. \emph{system-wide crash-recovery model}).

%We say that an implementation $I$ is admissible in the \emph{individual-process crash model} if there exists an execution of $I$ that is admissible in the \emph{individual-process crash model}.
%We say that an implementation $I$ is admissible in the \emph{system-wide process crash model} if every execution of $I$ is admissible in the \emph{system wide crash model}.

\myparagraph{Safety property: Strict-Linearizability}
A history $H$ is \emph{strict-linearizable} with 
respect to an object type $\tau$ if there exists
a sequential history $S$ equivalent to $H^c$, a \emph{strict completion of $H$}, such that
(1) $\rightarrow_{H^c}\subseteq \rightarrow_S$ and
(2) \emph{$S$ is consistent with the sequential specification of $\tau$}.

A strict completion of $H$ is obtained
from $H$ by inserting matching responses for a subset of pending operations after the operation’s
invocation and before the next crash step (if any), and finally removing any remaining pending
operations and crash steps.

\myparagraph{Liveness}
An object implementation is \emph{obstruction-free} if for any execution $E$ and any pending operation $\pi_i$ by process $p_i$,
$\pi$ returns a matching response in $E\cdot E'$ or crashes where $E'$ is the complete \emph{solo-run} ($E'$ only contains steps of $p_i$ executing $\pi$) execution fragment of $\pi$ by $p_i$.
%and any process process $p_i$ that invokes an operation $\pi$ immediately after $E$, $\pi$ returns a matching response in $E\cdot E'$ where $E'$ is the complete \emph{step-contention free} ($E'$ only contains steps of $p+i$ executing $\pi$) execution fragment of $\pi$ by $p_i$. 
An object implementation is \emph{non-blocking} if in every execution, at least
one of the \emph{correct} processes completes its operation in a finite number of steps or it crashes. An object implementation is \emph{wait-free} if in every execution, every \emph{correct} process completes its operation within a finite number of its own steps or crashes. Obviously, liveness in the crash-stop model is identical to the above without the option of process crashing.

\myparagraph{Old identifiers crash-recovery model}
Consider an execution $E$ and a process $p_i$ that crashes in $E$.
We say that an execution $E$ is admissible in the \emph{old identifiers crash-recovery model} if for any process $p_i$ and any event $\bot_{\mathbb{P}}$ in $E$ such that $i\in \mathbb{P}$, $p_i$ takes its first step in $E$ after the crash by invoking a new operation.

%\myparagraph{A fundamental limitation of the old identifiers crash-recovery model.}
%

\myparagraph{New identifiers crash-recovery model}
We say that an execution $E$ is admissible in the \emph{new identifiers crash-recovery model} if for any process $p_i$ and any event $\bot_{\mathbb{P}}$ in $E$ such that $i\in \mathbb{P}$, process $p_i$ no longer takes steps following $\bot_{\mathbb{P}}$ in $E$. 
Note that even in this model, there are at most $\mathbb{N}$ \emph{active} processes in an execution, i.e., processes that have not crashed.

\vspace{1mm}
As we next prove, given an implementation in the crash-stop model, using it as is in the old identifiers crash-recovery model may result a sequential execution in which a process returns an invalid response. Therefore, it is not trivial to transform an implementation from the crash-stop model to the old identifiers crash-recovery model.
On the other hand, any execution in the new identifier crash-recovery model with crash events is indistinguishable to all non-crashed processes from an execution in the crash-stop model in which every crashed process simply halts, and vice-versa. Thus, and by abuse of notation, we can consider the same execution in both models in the context of deriving proofs for a given implementation.

For this reason, all results in this work concern the new identifiers crash-recovery model, thus we do not state the model explicitly. We note that all impossibility results in this paper holds also for the old identifiers crash-recovery model. This stems from the fact that given an execution in the new identifiers crash-recovery model, it can be seen as an execution in the old identifiers crash-recovery model when $\mathbb{N}$, the total number of processes in the system, is larger then the number of processes taking steps in the execution.

\begin{lemma} \label{lemma1}
There exists a sequential implementation $A$ of an object type $\tau$ in the crash-stop model providing \emph{sequential} liveness, such that $A$ is not consistent with the sequential specification of $\tau$ in the old identifiers system-wide crash-recovery model.
\end{lemma}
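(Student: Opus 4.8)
The plan is to construct a concrete object type $\tau$ and a sequential crash-stop implementation $A$ of it that becomes inconsistent once a process is allowed to crash mid-operation and then restart with the same identifier. The key intuition is that a sequential crash-stop implementation may legitimately rely on the invariant that its shared base objects are only ever modified \emph{between} complete operations; a crash in the old identifiers model can violate this invariant by leaving the base objects in a state that reflects a partially executed operation, and upon recovery the process (with its local state reset to its initial state) has no way to detect or repair this. I would choose $\tau$ to be something simple — e.g., a counter supporting $\op{Inc}$ and $\op{Read}$, or even just a single-writer register — so that the sequential specification is completely unambiguous, and then exhibit the separation via an implementation that does a multi-step update to its representation.

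First I would fix $\tau$ and describe $A$ explicitly. For instance, take $\tau$ to be a two-valued register initialized to $0$ with operations $\op{Set}$ (writes $1$) and $\readop$, and let $A$ represent the register using \emph{two} base objects $b_1, b_2$, where $\op{Set}$ writes $b_1 \gets 1$ and then $b_2 \gets 1$ in two separate steps, and $\readop$ returns $1$ iff $b_1 = b_2 = 1$ (say). In the crash-stop model this is trivially a correct sequential implementation with sequential liveness: every complete $\op{Set}$ leaves $b_1 = b_2$, so $\readop$ always sees a consistent snapshot, and the sequential specification of the register is satisfied. Next I would exhibit the bad old-identifiers system-wide crash-recovery execution: a single process $p_i$ invokes $\op{Set}$, performs the step $b_1 \gets 1$, and then a crash step $\bot_{\mathbb{P}}$ occurs (with $\mathbb{P} = \mathbb{N}$, so this is system-wide); by definition of the old identifiers model, $p_i$'s next step is the invocation of a \emph{new} operation, and its local state has been reset. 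Now $p_i$ invokes $\readop$: the base objects are in the state $b_1 = 1, b_2 = 0$, so (depending on exactly how I wire the read predicate) the read returns a value that is not consistent with any legal sequential history of $\tau$ — concretely, I would arrange the encoding so that the recovered read returns a value that no completion of the history $\op{Set}$ (which can be completed only to "$\op{Set}$ took effect" or "$\op{Set}$ did not take effect") can justify, or more cleanly, have a \emph{second} read later return a contradictory value, so that no single sequential history explains both reads.

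The main obstacle is getting the inconsistency to be a genuine violation of the \emph{sequential specification of $\tau$} rather than merely a violation of linearizability or of well-formedness. Strict-linearizability's strict completion is permissive: the crashed $\op{Set}$ may be dropped or completed before the crash, so returning "the register is $0$" after the crash is in fact consistent; the issue must instead be internal \emph{incoherence} — two post-recovery reads (or a read whose response is outside the type's response set) that cannot be reconciled with \emph{any} assignment of effects to the crashed operation. So I would design $A$ so that the half-completed state is "sticky" and causes the implementation to produce two mutually contradictory responses after recovery (e.g. the first recovered $\readop$ returns $1$ because it only checks $b_1$, while a later recovered $\readop$ returns $0$ because garbage was written to $b_2$ by some cleanup step), or, even more simply, so that $A$ returns a response value that is syntactically not in the set of responses that $\tau$ allows for that operation, which is automatically inconsistent with the sequential specification. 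Once such an $A$ is pinned down, the rest is a routine check that (a) $A$ is a correct sequential implementation with sequential liveness in the plain crash-stop model and (b) the displayed execution is admissible in the old identifiers system-wide crash-recovery model, which follows directly from the definitions of $\bot_{\mathbb{P}}$ and of that model given above.
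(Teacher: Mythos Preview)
Your proposal is correct, and your final suggestion---have the implementation return a response value outside $\tau$'s response set whenever the base objects are found in a half-completed state---is precisely the paper's approach. The paper realizes this generically rather than for a specific $\tau$: it wraps an arbitrary correct sequential implementation $B$ with a per-process guard bit $R[i]$, where each operation first reads $R[i]$ and returns a fresh symbol $\aleph$ if it is already $1$, otherwise sets $R[i]\gets 1$, runs $B$, and resets $R[i]\gets 0$ before returning; a crash between the set and the reset then forces the next invocation by the same identifier to output $\aleph\notin$ responses of $\tau$.
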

To prove the claim we present an implementation $A$ of type $\tau$, and construct an execution in the old identifiers system-wide crash-recovery model, such that process $p_i$ invokes and completes in a crash-free manner an operation $\sigma$; however $\sigma$ returns a response that is not consistent with the sequential specification of $\tau$ (sequential liveness in the lemma statement simply requires that a process running sequentially will complete its operation with a matching response).
Notice that the lemma holds for the more restricted system-wide crash model, hence it holds for the individual crash model as well.
Moreover, the lemma holds even for the weak sequential liveness progress condition, where no concurrency is allowed.

\begin{algorithm}[H]
        
        \footnotesize
		\begin{flushleft}
		    \textbf{Private variables}: SWSR register $R[i]$ initially 0
		\end{flushleft}
    
        \begin{procedure}[H]
			\caption{() \small $\pi(args)$}
			
			\lIf{$R[i] \neq 0$} {\KwRet $\aleph$} \label{algA:read-Ri}
			$R[i] \leftarrow 1$ \label{algA:set-Ri} \;
			Proceed to execution of $\pi(args)$ as in implementation $B$. \;
			Before returning set $R[i] := 0$ \;
		\end{procedure}
		
		\caption{Algorithm $A$. Code for process $p_i$}
    \end{algorithm}

\begin{proof}
    Let $B$ be a sequential implementation of an object type $\tau$. Assume for simplicity no operation returns the value $\aleph$. Implementation $A$ is obtained by `wrapping' each operation of $B$ with an extra code. Each process $p_i$ has a single-writer single-reader register $R[i]$ initialized to 0. At the beginning of each operation $p_i$ first reads $R[i]$. If it contains value different then 0 then $p_i$ stops and returns $\aleph$. Otherwise, $p_i$ writes 1 to $R[i]$, continue with the execution of $\pi$ as in implementation $B$, and finally before returning writes 0 back to $R[i]$.
    
    Clearly, algorithm $A$ is a sequential implementation of object $\tau$ in the crash-stop model, since any process $p_i$ takes the same steps as algorithm $B$ except for the extra reads and writes of $R[i]$. Since $R[i]$ is initialized to $0$, and as $p_i$ set it back to $0$ before completing any operation, $R[i]$ value is $0$ at the beginning of each operation.
    Consider the following execution in the crash-recovery model. Process $p_i$ invokes an operation $\pi$ and the system crash right after $p_i$ sets $R[i] \leftarrow 1$ in line \ref{algA:set-Ri}. Next, the system recovers $p_i$ by invoking a new operation $\sigma$, and let $p_i$ complete it in a crash-free manner.
    %Denote the resulted execution by $E$.
    %Notice that such an execution exists in any implementation which guarantee weak obstruction-free progress property, as $p$ is the only process to take steps in $E$. Moreover,
    Since $p_i$ crashed after setting $R[i]$ to 1, $p_i$ reads 1 in line \ref{algA:read-Ri} when executing $\sigma$, and thus returns $\aleph$.
\end{proof}

%Henceforth, all results in this work concern the new identifiers crash-recovery model, thus we do not state the model explicitly.
%Any execution in the new identifier crash-recovery model with crash events is indistinguishable to all non-crashed processes from an execution in the crash-stop model in which every crashed process simply halts. Thus, and by abuse of notation, we consider the same execution in both models in the context of deriving proofs for a given implementation.

%
%\subsection{Explicit epoch persistency memory model}

\section{Process Helping}
\label{sec:help}

In this section we present the various variants of helping based on previous works \cite{help15, DBLP:jattiya-help}. We then show that linearization-helping and universal-helping are not comparable, i.e., one does not implies the other.

\myparagraph{Linearization-helping  (\cite{help15}, rephrased)}
We say that $f$ is a \emph{linearization function} over a set of histories $\mathcal{H}$, if for every $H\in \mathcal{H}$, $f(H)$ is a linearization of $H$. We say that operation $\pi_1$ is decided before $\pi_2$ in $H$ with respect to $f$ and a set of histories $\mathcal{H}$, if there exists no $S\in \mathcal{H}$ such that $H$ is a prefix of $S$ and $\pi_2 <_{f(S)} \pi_1$.
Throughout the paper, the binary relation $<$ is used to denote that the linearization of one operation precedes another.

A set of executions $\mathcal{E}$ is \emph{linearization-help free} if there exists a linearization function $f$ over $\mathcal{E}$, such that for any two operations $\pi_1, \pi_2 \in E$ and a single step $\gamma$ such that $E\cdot \gamma \in \mathcal{E}$, it holds that if $\pi_1$ is decided before $\pi_2$ in $E\cdot \gamma$ and $\pi_2$ is not decided before $\pi_1$ in $E$, then $\gamma$ is a step of $\pi_1$ by the process that invoked $\pi_1$.
We say that an implementation is \emph{linearization-help free} if the set of admissible histories is linearization-help free.

\myparagraph{Universal-helping (\cite{DBLP:jattiya-help}, rephrased)}
for simplicity and without loss of generality, for the purposes of defining universal-helping, we assume that the first step of every operation is to publish its signature (i.e., the operation type and its operands). 
Consider a linearizable implementation $A$ of an object type $\tau$ and a function $t:\mathbb{N}\mapsto \mathbb{N}$. Then, $A$ has \emph{t-universal-helping} (when $t$ is clear from the context, we leave it out) if for every finite execution $E\cdot E'$ such that some process completes $t(n)$ or more operations in $E'$ whose invocations are contained in $E'$, there is a linearization of $E\cdot E'$ satisfying the following conditions:
\begin{itemize}
    \item linearization of $E\cdot E'$ contains every operation that is incomplete in $E$
    \item for every extension $E''$, the execution $E\cdot E' \cdot E''$ has a linearization such that the linearization of $E\cdot E'$ is the same
\end{itemize}
Otherwise, we say that $A$ is \emph{universal-help free}.

\vspace{1mm}
%\subsection{Linearization-helping vs. Universal-helping}
\cite{DBLP:jattiya-help} proved that universal-helping implies linearization-helping. However, a careful inspection of the proof reveals an implicit assumption on the object type was made. Roughly speaking, \cite{DBLP:jattiya-help} conclude that if a pending operation needs to be linearized by steps of other process due to universal helping then this implies linearization-helping. Although it is the case for many objects, the key point to consider is that universal-helping and linearization-help free definitions requires the existence of a linearization function satisfying specific conditions. Exploiting this flexibility we prove an implementation has universal-helping using some linearization function, while proving it is also linearization-help free using a different linearization function.

\newcommand{\countt}{\textsf{count}}

\begin{claim}
\label{cl:help1}
    There exists a wait-free strict-linearizable implementation $A$ of an object type $\tau$ in the individual crash-recovery model, such that $A$ has universal-helping and it is linearization-help free in the crash-stop model.
\end{claim}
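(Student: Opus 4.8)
The plan is to exhibit a concrete object type $\tau$ and a wait-free implementation $A$ that (i) is strict-linearizable in the individual crash-recovery model, (ii) admits a linearization function witnessing universal-helping, and (iii) admits a \emph{possibly different} linearization function witnessing linearization-help freedom. The key conceptual point, flagged in the paragraph preceding the claim, is that the two helping notions quantify over the \emph{existence} of a linearization function with certain properties, so we are free to use two different witnesses. Concretely, I would take $\tau$ to be a simple counter-like or max-register-like type whose responses are insensitive to the \emph{order} in which concurrent operations are linearized — e.g.\ an object supporting $\textsf{increment}$ and $\textsf{read}$, or a ``set-once / collect'' style object — so that for any reachable history there is a canonical linearization (say, order all decided operations by the real-time order of their invocation events, breaking ties by process id) that is forced by the data and hence never gets reordered in any extension. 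For such a $\tau$, linearization-help freedom is essentially automatic: pick $f$ to be this canonical order; then an operation $\pi_1$ becomes ``decided before $\pi_2$'' exactly when $\pi_1$'s invocation is already placed, and the step $\gamma$ that causes the decision can be arranged to be $\pi_1$'s own announce step, satisfying the definition.

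For universal-helping I would build into $A$ an explicit helping mechanism on top of this type: each operation first announces its signature in a shared array, and a slow background ``sweep'' by every process applies all announced-but-not-yet-applied operations to the shared state (a standard announce-and-help pattern, made wait-free because the type is simple and applying an operation is a bounded number of primitive steps). Then whenever some process completes $t(n)$ operations in $E'$, it must have executed enough sweeps to have observed and applied every invocation that was pending in $E$; choosing the universal-helping linearization function to place all such applied operations (including the ones belonging to crashed or stalled processes) consistently with the order in which the sweep applied them gives a linearization of $E\cdot E'$ containing every operation incomplete in $E$, and — crucially — because the type is order-insensitive, this partial linearization can be preserved under every extension $E''$. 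Here I would set $t(n)$ to whatever bound falls out of the sweep schedule (e.g.\ $t(n)=n+1$ or $2n$). Strict-linearizability in the individual crash-recovery model follows because each operation's ``effect'' is a single primitive step (the sweep's apply step, or an idempotent CAS), so a crashed operation has either already taken effect before the crash or can be dropped; the announce array makes the recovery behaviour well-formed, and since we are in the new-identifiers model (as the excerpt notes, results are stated there) a crashed process simply stops, so the execution is indistinguishable from a crash-stop one with a halted process, and wait-freedom is inherited.

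The main obstacle — and the part that needs genuine care rather than routine checking — is arranging the \emph{same} implementation to simultaneously witness both properties with the two different functions while the object type stays nontrivial enough that ``universal-helping'' is not vacuous: we need a history in which a pending operation of a crashed/stalled process really is forced into the linearization by another process's progress (so $A$ genuinely has universal-helping and the claim is not degenerate), yet the canonical order $f$ used for linearization-help freedom never has to reorder any pair of operations across extensions. The tension is that universal-helping requires other processes' steps to pull $\pi$ into the linearization, which superficially looks like exactly the kind of ``helping'' that linearization-help freedom forbids; the resolution, which I would make precise, is that linearization-help freedom only constrains \emph{which process's step decides the relative order of two operations}, not which step causes an operation to be linearized at all — so by choosing $\tau$ order-insensitive, the relative order of $\pi_1,\pi_2$ is decided by their own announce steps (or is never ``decided'' in a way that any extension could violate), even though the application of $\pi$ to the shared state is performed by a helper. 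I would therefore spend most of the proof (a) pinning down the type $\tau$ and the implementation pseudocode, (b) defining the two linearization functions explicitly, and (c) verifying the quantifier structure in both definitions line by line, since that is where the subtlety lives.
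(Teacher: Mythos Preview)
Your high-level insight---that universal-helping and linearization-help freedom each quantify existentially over linearization functions, so different witnesses may be used---is exactly right and matches the paper. But the concrete construction has two gaps. First, the announce-and-sweep implementation is not strict-linearizable: let $p_1$ invoke an increment, announce, and crash; then $p_2$ completes a read returning $c$; then $p_3$'s sweep applies $p_1$'s increment; then $p_4$ reads $c{+}1$. Any linearization must place $p_1$'s increment strictly between the two reads, both of which occur after the crash, so the increment can neither be linearized before the crash nor dropped. The same interleaved-reads scenario defeats the max-register and set-once variants; order-insensitivity \emph{among updates} does not help once reads pin the update's position. Second, your LHF witness (order by invocation time) is not even a valid linearization function for these types under your implementation: if $p_1$ invokes an update and then $p_2$ invokes and completes a read that does not observe it (no sweep yet), invocation order puts the update first, contradicting the read's response. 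So neither ``each operation's effect is a single primitive step'' nor ``the relative order is decided by their own announce steps'' survives scrutiny.

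The paper's construction avoids both problems by dispensing with explicit helping altogether. The object is a $k$-bounded \fai\ counter (return the current value and increment, saturating at $k$), implemented as a bare CAS loop with no announce array and no sweep. For the LHF and strict-linearizability witness, each \fai\ is linearized at its own successful CAS, or at its own read of the saturated value $k$; every linearization point is a step of the owner, so LHF is immediate and a crashed operation either already CASed before the crash or is simply dropped. For the universal-helping witness, once some operation drives the counter to $k$, all still-pending \fai s are linearized immediately after it in arbitrary order (valid since they will all return $k$), and thereafter every new \fai\ is linearized at its first step; hence any process completing $k$ operations forces saturation and linearizes every pending operation, with the order fixed in all extensions. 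The idea you are missing is that universal-helping here is delivered by the \emph{type semantics} (saturation at $k$) rather than by any helping code in the implementation.
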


\begin{proof}
A \texttt{$k$-bounded Counter} $\tau$ is an object type supporting a single operation \op{Fetch\&Increment} (\fai). The initial value of $\tau$ is 0. A \fai\ operation $\pi$ applies to $\tau$ with value $l$ changes the value of the object to $l+1$ and returns $l$ if $l<k$, otherwise $l=k$ and $\pi$ returns $k$ without changing $\tau$'s value.
Algorithm \tsref{alg:FAI} is a wait-free implementation of a \texttt{$k$-bounded Counter} using \CAS\ (compare-and-swap) primitive.

\proofparagraph{Linearization-help free}
To prove Algorithm \tsref{alg:FAI} is linearization-help free, it is enough to present a linearization function such that each operation $\pi$ is linearized at a step by the process performing it. Consider the following linearization function: for any execution $E$ and an operation $\pi \in ops(E)$, $\pi$ is linearized at its successful \CAS\ event, if such exists. Otherwise, if $\pi$ reads the value $k$ in line~\ref{line:one}, this is the linearization point of $\pi$.

\proofparagraph{Universal-helping}
We present a linearization function satisfying the conditions of the definition for universal-helping.
%linearization of $E\cdot E'$ contains every operation that incomplete in $E$, the execution $E\cdot E' \cdot E''$ has a linearization such that the linearization of $E\cdot E'$ is the same for every extension $E''$ ($E\cdot E'$ is an execution of Algorithm~\tsref{alg:FAI}). Consider the following linearization points:
An operation $\pi$ performing a successful \CAS\ operation is linearized at the point of the \CAS. In addition, once some operation $\pi$ changes the value of \countt\ to $k$, any other pending operation that is yet to be linearized, is linearized (in an arbitrary order) immediately after $\pi$. From that point on, any new invoked operation is linearized on its first step.
Let $\pi$ be a pending \fai\ operation in an execution $E$. The following holds -- either $\pi$ already have a linearization point in $E$; or that if any other process completes $k$ operations starting from $E$ then $\pi$ have a linearization point. Moreover, the assignment of linearization points is the same for any extending execution.

\proofparagraph{Strict-Linearizability}
The linearization function presented above to prove Algorithm \tsref{alg:FAI} is linearization-help free, also proves it is strict-linearizable. Any operation $\pi$ is linearized on a step by its owner. Hence, in case process $p_i$ crashes while executing an operation $\pi$, either the operation was linearized before the crash of $p_i$, or that $\pi$ has no linearization point in any extending execution.
\end{proof}

\begin{algorithm}[H]
        
%        \nonl
%        \removelatexerror
        \footnotesize
        
		\begin{flushleft}
		    \textbf{Shared variables:}
		        $\countt := 0$ \\ 
		\end{flushleft}
    
        \begin{procedure}[H]
			\caption{() \small $\op{Fetch\&Increment} ()$}
			
			\While{true} {
			    $val := \countt$ \; \label{line:one}
			    \lIf {$val = k$} {
			        \KwRet $k$
			    }
			    \uIf {$\CAS (\countt,val,val+1)$} {
			        \KwRet $val$ \;
			    }
			}
		\end{procedure}
		
		\caption{\texttt{$k$-bounded Counter}. Code for process $p_i$}
		\label{alg:FAI}
    \end{algorithm}

%\textcolor{blue}{We need to clarify the relationship between the definitions in detail}
\begin{lemma}
\label{lm:help1}
There exists an implementation of a data type that satisfies universal-helping (resp. linearization-helping), but does not satisfy linearization-helping (resp. universal-helping). 
\end{lemma}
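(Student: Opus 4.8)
The statement asserts incomparability in both directions, so I would prove two existential claims separately. The first, an implementation that satisfies universal-helping but not linearization-helping, is already in hand: the \texttt{$k$-bounded Counter} of Claim~\ref{cl:help1} was shown to have universal-helping while being linearization-help free, which is exactly this combination. The plan is therefore to focus entirely on the reverse direction: exhibiting a linearizable implementation that has linearization-helping but is universal-help free. For that direction I would take an order-dependent type --- concretely a FIFO queue with operations \op{Enq} and \op{Deq} --- and give it a linearizable implementation in the Herlihy--Wing style, in which an \op{Enq} first reserves a position and only later publishes its value, while a \op{Deq} returns a value from the earliest occupied position it observes. Because the reserving step and the publishing step are distinct, the relative order of two enqueues need not be fixed by their own steps; it can instead be fixed by a concurrent \op{Deq} that observes one value while the other is still unpublished. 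This structural feature is what I would exploit to force helping.

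To prove the implementation has linearization-helping I must show that \emph{every} linearization function $f$ admits what I will call a foreign flip: operations $\pi_1,\pi_2$, an execution $E$, and a step $\gamma$ with $E\cdot\gamma$ admissible such that $\pi_1$ becomes decided before $\pi_2$ at $\gamma$ while $\gamma$ is not a step of the process that invoked $\pi_1$. I would build a critical execution $E$ containing two completed enqueues $\pi_1,\pi_2$ that have reserved positions but whose relative order the shared state leaves open, in the sense that $E$ extends both to an admissible execution $S_1\in\mathcal{E}$ whose later \op{Deq} returns $\pi_1$'s value first and to an admissible $S_2\in\mathcal{E}$ whose later \op{Deq} returns $\pi_2$'s value first. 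Since $f(S_1)$ is a linearization of $S_1$ it must order $\pi_1<\pi_2$, and $f(S_2)$ must order $\pi_2<\pi_1$; hence for \emph{every} $f$ the witnesses $S_1,S_2$ show that in $E$ neither $\pi_1$ is decided before $\pi_2$ nor $\pi_2$ before $\pi_1$. Now extend $E$ by the single step $\gamma$ of a \op{Deq} at which it commits to returning $\pi_1$'s value; after $\gamma$ the type's sequential specification forces $\pi_1<\pi_2$ in all continuations, so $\pi_1$ is decided before $\pi_2$ in $E\cdot\gamma$. As $\gamma$ belongs to the dequeuer rather than to $\pi_1$'s owner, this is a foreign flip, and the argument is uniform over $f$, so the implementation has linearization-helping.

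To prove it is universal-help free I would show that $t$-universal-helping fails for every $t$. Fix $t$, and let $E$ be the execution in which a process $p$ invokes an \op{Enq}, reserves its position, and is then delayed (ordinary asynchrony, not a liveness violation) before publishing; let $E'$ be an execution in which another process invokes and completes $t(n)$ enqueue/dequeue operations. In $E\cdot E'$ the value of $p$'s operation is absent from the object, so no \op{Deq} can return it; consequently no linearization of $E\cdot E'$ can include $p$'s operation (which is incomplete in $E$) and simultaneously remain stable under the extension in which $p$ eventually publishes, so the two required conditions of universal-helping cannot both hold. Since $t$ was arbitrary, the implementation is universal-help free, and combined with the previous paragraph it realizes the remaining combination.

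The main obstacle is the universal quantification over linearization functions in the linearization-helping proof: it is not enough that one particular linearization helps. The leverage is the semantic dichotomy above --- once I exhibit a configuration from which both dequeue outcomes are admissible, the ``decided'' relation is pinned down by the sequential specification rather than by $f$, so the conclusion applies uniformly to every $f$. The delicate modeling point I would need to nail down is that the implementation genuinely leaves the order of $\pi_1$ and $\pi_2$ open after their reservations; otherwise their own reserving steps, and not the \op{Deq}, would decide the order and the helping would disappear. The separation of reserving a position from publishing a value is precisely the device that guarantees this openness while still admitting the delayed, unpublished enqueue needed for universal-help freedom.
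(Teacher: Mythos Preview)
Your proposal is correct and, for the first direction (universal-helping without linearization-helping), coincides with the paper: both invoke the bounded counter of Claim~\ref{cl:help1}.

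For the reverse direction you take a genuinely different route. The paper simply cites the Ellen et al.\ BST, arguing informally that mutual helping among updates yields linearization-helping while an infinite run of \emph{find} operations never linearizes a pending update, defeating universal-helping. You instead build a Herlihy--Wing--style queue and argue from first principles. The chief advantage of your route is that you squarely address the universal quantification over linearization functions: by exhibiting extensions $S_1,S_2$ in which the sequential specification of a FIFO queue \emph{forces} opposite orders of the two enqueues, you pin down the decided-before relation semantically, so the foreign flip at the dequeuer's step holds for every $f$. The paper's sketch leaves that quantifier to intuition. Your universal-help-free argument is likewise more explicit than the paper's one-line appeal to finds-not-helping-updates. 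What the paper's choice buys is an off-the-shelf wait-free witness; the Herlihy--Wing dequeue is only non-blocking, though the lemma imposes no liveness requirement so this is harmless.

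One point to tighten: in the standard Herlihy--Wing queue, once both enqueues have published and no dequeue is in flight, every future dequeue scans from slot~$0$ and returns the earlier slot first, so the order is \emph{not} open merely because ``both enqueues are completed.'' The openness you need comes from a pending dequeue that has already scanned past the earlier slot before that slot was populated (or, alternatively, from leaving one enqueue unpublished). Your step $\gamma$ is a dequeue step, so you implicitly have such a dequeue in $E$; make this explicit in the construction of $E$ and the argument goes through cleanly.
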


\begin{proofsketch}
The bounded counter implementation from Claim~\tsref{cl:help1} gives the proof for one direction of Lemma~\tsref{lm:help1}. To prove the other direction, we observe that any implementation in which only a subset of the operations are getting help has linearization-helping but no universal-helping. For example, in the Binary Search Tree implementation of Ellen et al.~\cite{EllenFRB10} update operations help each other to complete, while find operations do not complete any incomplete update operation. Therefore, intuitively two update operations can prove linearization-helping. However, an infinite sequence of find operations do not complete any pending update operation; thus denying universal-helping since every pending operation must be eventually linearized. \cite{DBLP:jattiya-help} also describes an implementation satisfying linearization-helping, but not universal-helping.
\end{proofsketch}

The main results in this work focus on universal and linearization-helping. However, \cite{DBLP:jattiya-help} also introduced the definition of \emph{valency-helping} which unlike linearization-helping and universal-helping is defined on operation responses. Since correctness conditions like strict-linearizability only care about linearization order, it is more relevant to discuss help definitions that are defined using the linearization order of operations, and not using the response values. Nonetheless, for pedagogical purposes, we also introduce the definition of valency-helping (and mention its relevance in the context of some results).

\myparagraph{Valency-helping (\cite{DBLP:jattiya-help}, rephrased)}
Let $A$ be an implementation of object type $\tau$ whose inputs and output responses of operations belong to the set $\mathcal{V}$.
Let $C$ be a configuration of $A$ and $\pi$ an operation by process $p_i$. We say that $\pi$ is \emph{$v$-univalent} in $C$ if in every configuration $C'$ that is reachable from $C$ in which $\pi$ is complete, $\pi$ returns $v$; otherwise it is \emph{multivalent}.

Given an operation $\pi$ by process $p_i$, and a configuration $C$ such that $\pi$ is multivalent in $C$, we say that a process $p_j;j\neq i$ \emph{helps} process $p_i$ in $C$ if the next step of $p_j$ results a configuration $q(C)$ such that $\pi$ is $v$-univalent in $q(C)$ for some $v$. We say that $A$ has \emph{valency-helping} if it has a configuration $C$ such that some process $p_i$ helps process $p_j$ in $C$. Otherwise we say that $A$ is \emph{valency-help free} (Note that valency-helping is incomparable to both universal-helping and linearization-helping~\cite{DBLP:jattiya-help}).
%\end{proof}
%\subsection{Helping and Concurrency: Summary of Results}
%The relation between strict linearizabilty and the different forms of helping in the different model is presented in the following table.
%"No" indicates that there exists no algorithm which is strict-linearizable under the given model with the given helping.
%"Yes" indicates that there exists an algorithm which is strict-linearizable in the given model, and uses the helping as indicate by the table.
%

\section{Strict-linearizability vs. Linearization-helping}
\label{sec:SL-LH}

In this section we prove that strict-linearizability and linearization-helping are not bonds together.
That is, a strict-linearizable implementation in the crash-recovery model can either have or not have linearization-helping in the crash-stop model.

Claim~\tsref{cl:help1} proves that an implementation can be strict-linearizable in the crash-recovery model while being linearization-help free in the crash-stop model.
In Claim~\tsref{cl:DRAF} below we prove there in fact exist linearizable object type implementations that are wait-free with both linearization-helping and universal-helping, but are also strict-linearizable. We find this result to be somewhat surprising, since intuitively helping seems to contradict strict-linearizability -- an operation can be linearized by steps of other processes, thus if the owner of the operation crash the linearization point of the crashed operation may be after the crash. As we show, one can consider different linearization functions to proves the different properties of the implementation.

\subsection*{Double-Read-And-Fix Object Type}

\newcommand{\val}{\textsf{val}}
\newcommand{\readann}{\textsf{readAnn}}

The type used to prove the claim is a \texttt{Double-Read-And-Fix} (\draf) type object that supports \readop\ and \writeop\ operations. Its sequential specification is as follows: \writeop($v$) operation returns \ack. The first \readop\ operation returns an ordered pair $\langle v_1, v_2 \rangle$ of the last two preceding writes (if such exists, otherwise a unique $\bot$ symbol is used). Any later \readop\ returns $\langle v_1, v_2 \rangle$ as well. That is, the first \readop\ operation fix the value of the object for the rest of the history.

Algorithm \ref{alg:DRAF} presents a wait-free implementation of a \draf\ type using read, write and \CAS\ primitives. The variable \val\ is composed of four components. The first two, denoted $v_1,v_2$, represents the status of the object, i.e., the value of the last two writes (in order). The third component, named $counter$, is a global monotonically increasing counter, counting the number of writes performed to the object so far.
%That is, any write increases the value of the counter by 1.
The fourth component, named $gate$, is used to fix the status of the object after the first read operation. Once $gate$ is set to 1 no farther changes to \val\ are allowed.
In addition, a shared boolean variable \readann\ is used to announce a read operation started, so other processes can help it complete. A detailed description follows.

To perform a \writeop($u$) operation, process $p_i$ first read the counter stored in \val. Then it proceeds according to the value of \readann.
If \readann\ is set, i.e. its value is  1, implying some read operation has been invoked, $p_i$ helps it to complete by iteratively attempting to atomically set the $gate$ component of \val\ to 1 while not changing the other components, using a \CAS. Once $gate$ is set, $p_i$ returns \ack.
Otherwise no read operation has started. In such case, $p_i$ iteratively attempts, using \CAS, to atomically add its value $u$ to \val\ by setting the first component of \val\ to $u$, while increasing the counter by 1. $p_i$ stops and returns \ack\ once one of the following holds -- it successfully adds its value to \val\ in line \ref{DRAF-direct}; the counter was advanced by 2 or more since $p_i$ first read it; or, $gate$ has been set.

A \writeop\ operation is called \emph{direct-write} if it returns in line \ref{DRAF-direct}, \emph{indirect-write} if it returns in line \ref{DRAF-indirect}, and \emph{helping-write} if it returns in line \ref{DRAF-helping}. Notice that once a \writeop\ operation finds \readann\ to be 1 in line \ref{DRAF-if-gate-1} it guaranteed to be a helping-write.

\begin{algorithm}[]
        
        \footnotesize
		\begin{flushleft}
		    \textbf{Shared variables:} \\
		        $\qquad \val := \langle \bot, \bot, 0, 0 \rangle$ \\ 
		        $\qquad \readann := 0$ \\
		\end{flushleft}
    
        \begin{procedure}[H]
			\caption{() \small \writeop\ ($u$)}
			
			$localCounter := \val.counter$ \label{DRAF-counter-read} \;
			\uIf{$\readann = 1$ \label{DRAF-if-gate-1}}{
			    \While {true \label{DRAF-while-gate-set}} {
			        $\langle v_1, v_2, counter, cgate \rangle := \val$ \label{DRAF-helping-read} \;
			        \lIf{$cgate = 1$} {
			            \KwRet \ack \label{DRAF-helping}
			        }
			        $\CAS (\val, \langle v_1, v_2, counter, 0 \rangle, \langle v_1, v_2, counter, 1 \rangle)$ \label{DRAF-helping-CAS} \;
			    }
			}
			\uElse {
			    \While {true \label{DRAF-while-gate-not-set}} {
			        $\langle v_1, v_2, counter, cgate \rangle := \val$ \label{DRAF-indirect-read} \;
			        \uIf{$counter \geq localCounter+2$ or $cgate = 1$ \label{DRAF-indirect-if}}{
			            \KwRet \ack \label{DRAF-indirect}
			        }
			        \uIf{$\CAS (\val, \langle v_1, v_2, counter, 0 \rangle, \langle u, v_1, counter+1, 0 \rangle)$ \label{DRAF-direct-CAS}} {
			            \KwRet \ack \label{DRAF-direct}
			        }
			    }
			}
		\end{procedure}
		
		\begin{procedure}[H]
			\caption{() \small \readop\ ()}
			
			$\readann := 1$ \;
			\While {true \label{DRAF-while-read}} {
			        $\langle v_1, v_2, counter, cgate \rangle := \val$ \label{DRAF-read-read} \;
			        \lIf{$cgate = 1$} {\KwRet $\langle v_1, v_2 \rangle$ }
			        $\CAS (\val, \langle v_1, v_2, counter, 0 \rangle, \langle v_1, v_2, counter, 1 \rangle$ \;
			    }
		\end{procedure}
		
		\caption{\texttt{Double-Read-And-Fix}. Code for process $p_i$}
		\label{alg:DRAF}
    \end{algorithm}
    
\proofparagraph {Linearzability}
Given an execution $E$ we linearized it as follows.
A direct-write operation is linearized at the point where it performed its successful $CAS$ in line \ref{DRAF-direct-CAS}.
A helping-write, as well as an indirect-write which observes the $gate$ component is set to 1 (line \ref{DRAF-indirect-if}), is linearized on its last read of \val\ (lines \ref{DRAF-helping-read}, \ref{DRAF-indirect-read} respectively).
An indirect-write which observes the $counter$ component was increased by 2 (or more) in line \ref{DRAF-indirect-if} is linearized on its first step.
For \readop\ operations, let $\pi$ be the first \readop\ operation to set \readann\ to 1. Then $\pi$ is linearized at the point where the $gate$ component is set to 1 by some process. All other read operations are linearized at their last read of \val\ in line \ref{DRAF-read-read}.

The correctness of the above linearization order relies on the following simple observations:
(i) once \readann\ is set to 1 it remains so for the rest of the execution; (ii) any operation invoked after this point, either \readop\ or \writeop, must first make sure the $gate$ component in \val\ is set before returning. Therefore, no \writeop\ invoked after this point can change the values in \val; (iii) once the $gate$ component is set it remains so forever, since there is no \CAS\ operation with $gate=1$ as its old value.

Denote by $t$ the time when the $gate$ component was set to 1. Let $\pi_1,\pi_2$ be the last two direct-writes by the above linearization order, if such exists. It follows that both are linearized before time $t$.
Any \readop\ operation $\pi$ returns only after observing the $gate$ component is set, thus $\pi$ is linearized at time $t$ or later and returns the values of $\pi_1,\pi_2$, since after time $t$ the content of \val\ is fixed. This also implies the first \readop\ to set \readann\ must be active at time $t$, thus its linearization point is within its interval.

Finally, we prove that all other writes except for $\pi_1,\pi_2$ does not effect the response of other operations.
Any \writeop\ observing the $gate$ component is set is linearized after time $t$. Since this is not a direct-write, it does not write to \val\ (except for maybe setting $gate$), and thus it does not effect other operations response. A \writeop\ observing the counter was increased by at least 2 does not write to \val. However, since any direct-write increases the value of the counter by 1, this implies at least two direct-writes are linearized after $\pi$ was invoked. Hence, $\pi$ is linearized on its first step, and is overridden by later writes.

\proofparagraph{Wait-Freedom}
A direct-write and indirect-write are wait-free -- failed \CAS\ in line \ref{DRAF-direct-CAS} implies the content of \val\ has been changed. However, any write to \val\ either set the $gate$ to 1, or increases $counter$ by 1. Therefore, the while loop in line \ref{DRAF-while-gate-not-set} can be iterated at most twice.

A \readop, as well as helping-write, are wait-free for the following argument. Notice that once the $gate$ component is set, any operation completes in a finite number of steps. Therefore, it suffice to prove the $gate$ is set after a finite number of steps. Consider some process $p_i$ attempting to set the $gate$, i.e., either a \readop\ or a helping-write operation. Any failed \CAS\ attempt implies the value of \val\ has been changed. If the $gate$ was set, then we are done. Otherwise, some other process $p_j$ performed a successful direct-write. Hence, its current operation is complete. However, on its next operation, either \readop\ or \writeop, it will observe \readann\ is set, thus it will try to set the $gate$ component as well. As a result, after at most $N$ failed $CAS$ attempts, all processes are either not active, or trying to set the gate, and at least one must succeed. 

\proofparagraph{Linearization-Helping}
Since \draf\ is an exact-order type, by Theorem 4.18 in \cite{help15}, any wait-free implementation of an exact-order type using read-write registers and \CAS\ has also linearization-helping.

\proofparagraph{Universal-Helping}
Denote by $f$ the linearization function used to prove linearizability. We next prove that a similar linearization function proves universal-helping.
Let $E$ be an execution, and $\pi$ be an operation in it by some process $p_i$. To prove universal-helping we prove that after a finite number of operations by any other process $\pi$ has a linearization point. We consider two cases based on the type of operation.

$\pi$ is a \readop\ operation. Any other process $p_j$ completing two operations after $\pi$ writes 1 to \readann\ must set the $gate$ component to 1, unless it was already set. Although $f$ linearize \readop\ operations on their last access to \val, we can change it such that once the $gate$ is set, all pending \readop\ operations are linearized (in an arbitrary order), while any future \readop\ is linearized on its first step. The correctness proof for the new linearization-order is similar to the one given for $f$.
This implies that once the $gate$ is set, $\pi$ have a linearization point, and this linearization point remains the same for any extension.

$\pi$ is a \writeop\ operation. Consider a different process $p_j$ completing two operations after $\pi$ reads the counter in line \ref{DRAF-counter-read}. Denote by $t'$ the time when $p_j$ complete the second operation.
Then in any of $p_j$ operations, it either observes $gate$ was set, or $counter$ was incremented (by $p_j$ or some other process). Hence, at $t'$ either $gate$ is set or that $counter$ was incremented by 2 or more.
We consider three cases:
(1) at $t'$ $\pi$ already written its value to \val, i.e., it is a direct-write. Thus by $f$ it has a linearization point before $t'$;
(2) at $t'$ the $gate$ is set. Similar to the read case, we can change $f$ such that all non direct-write that are active at the point where the $gate$ is set are linearized at this point (in an arbitrary order, but after the reads). Any later \writeop\ is linearized on its first step;
(3) at $t'$ the $counter$ was incremented by two or more since $\pi$ first read it. Then $\pi$ is linearized on its first step by $f$, and in particular before $t'$;
In all cases we get that $\pi$ has a linearization point in $t'$, and it remains the same for any extension.

\proofparagraph{Strict-Linearizability}
Consider some execution $E$ in the individual crash-recovery model. Assume $p_i$ crashed while executing an operation $\pi$.
Assume $\pi$ is a \writeop\ operation.
If $\pi$ is a direct-write, meaning that it has written to \val\ in line \ref{DRAF-direct-CAS} before the crash, then $\pi$ has a linearization point before the crash. In any other case, i.e., indirect-write and helping-write, $\pi$ does not writes its value to \val, thus we can consider it as having no linearization point at all.

Assume now $\pi$ is a \readop\ operation. If $\pi$ was invoked after $gate$ component was set, then we can consider it has having no linearization point, since it does not effect any other operation. The same can be done in case $gate$ was not set in $E$.
Otherwise, $\pi$ was invoked before $gate$ was set, and the $gate$ is set in $E$. Denote by $t$ the time when $gate$ was set. If there is an active non-crashed \readop\ operation at time $t$, we can linearize it at $t$, and all crashed \readop\ operations can be removed from the history without effecting any other operation. Therefore, we are left with the case where all read operations are crashed before time $t$.

w.l.o.g. assume $\pi$ is the first operation to set \readann\ and it is the only \readop\ operation invoked before time $t$. As we next prove, linearizing $\pi$ at $t$ does not violates strict-linearizability. Once this is done, all other crashed \readop\ operations, if any, can be removed from the history, and strict-linearizability follow.
Notice that $t$ is after the crash point of $p_i$, implying some other operation sets the $gate$. This seems to contradict strict-linearizability. Nevertheless, we next prove any operation that is linearized before $t$, is either pending or completed when $p_i$ crashed. Therefore, all those operations, including $\pi$ itself, can be seen as having linearization points before the crash, and this concludes the proof.

Let $\sigma$ be an operation by some process $p_j$ invoked after the crash of $\pi$.
If $\sigma$ is a \readop\ operation, then by our simplifying assumption $\pi$ is the only \readop\ before time $t$, thus $\sigma$ was invoked after time $t$, and in particular is linearized after time $t$.
Otherwise $\sigma$ is a \writeop\ operation. Notice that it can be invoked after the crash of $p_i$ and before time $t$. However, since $\pi$ already set \readann\ to 1, $\pi_j$ must be a helping-write. Therefore it completes only after the $gate$ is set, that is after time $t$, and it is also linearized after time $t$.
This proves that any operation that is linearized before time $t$ is invoked before the crash of $p_i$.

\begin{claim}
\label{cl:DRAF}
There exists an implementation $A$ of an object type $\tau$ such that $A$ is linearizable, wait-free and has both linearization-helping and universal-helping in the crash-stop model. Moreover, $A$ is strict-linearizable in the individual crash-recovery model.
\end{claim}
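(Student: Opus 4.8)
The plan is to exhibit the \texttt{Double-Read-And-Fix} (\draf) object type together with the wait-free implementation of Algorithm~\ref{alg:DRAF}, and to verify the five required properties: linearizability, wait-freedom, linearization-helping, and universal-helping in the crash-stop model, and strict-linearizability in the individual crash-recovery model. The \draf\ type is deliberately an \emph{exact-order} type in the sense of~\cite{help15}: the first \readop\ to fix the object constrains the response of every later \readop, and its placement in the linearization order is "decided" by whichever process sets the $gate$ component of \val, which need not be the reader itself. That is exactly what forces linearization-helping, and it comes for free from Theorem 4.18 of~\cite{help15}, which states that any wait-free implementation of an exact-order type over read/write registers and \CAS\ has linearization-helping. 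Wait-freedom is argued from the loop structure: the direct/indirect-write loop iterates at most twice because every write to \val\ either bumps $counter$ or sets $gate$; and once some process completes a successful direct-write, it will observe \readann{} = 1 on its next operation and start driving the gate to 1, so after at most $\mathbb{N}$ failed \CAS\ attempts every active process is trying to set the gate and one must succeed.

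For linearizability I would fix the linearization function $f$ described above: direct-writes at their successful \CAS; helping-writes and gate-observing indirect-writes at their last read of \val; counter-advanced indirect-writes at their first step; the first \readop\ to raise \readann\ at the moment the gate is set; and every other read at its last read of \val. Correctness rests on three monotonicity facts — \readann{} once $1$ stays $1$; $gate$ once $1$ stays $1$; and no write to \val\ after the gate is set changes $v_1,v_2$ — from which one shows the last two direct-writes $\pi_1,\pi_2$ are linearized before the gate-setting time $t$, every read is linearized at $t$ or later and returns $\langle v_1,v_2\rangle$ equal to the values written by $\pi_1,\pi_2$, and every other write is overwritten or vacuous. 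Universal-helping is then obtained by a closely related linearization function: the moment the gate is set, all pending reads (resp. all pending non-direct writes) are appended to the linearization order and every operation invoked afterwards is linearized on its first step. Since a process that completes two operations after $\pi$ publishes its signature must either see the gate set or see $counter$ advance by two, $\pi$ acquires a fixed linearization point within that bounded window, and that point never changes in any extension.

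The substantive part — and the step I expect to be the main obstacle — is strict-linearizability, because the one genuinely "helped" operation is the first \readop\ $\pi$ of a crashed process $p_i$: under $f$ it is linearized at $t$, which may lie \emph{after} the crash, seemingly violating the requirement that a crashed operation take effect before the crash or not at all. The plan to discharge this is a short case analysis. Writes are easy: a crashed direct-write is already linearized before the crash, while crashed indirect/helping-writes never touch \val\ and are simply dropped. For reads: if the gate is never set, or if $\pi$ was invoked after it was set, $\pi$ is order-irrelevant and removed; if at time $t$ some non-crashed reader is active, linearize \emph{it} at $t$ and delete all crashed reads. This leaves the hard case — every read active before $t$ has crashed — where I would first reduce (w.l.o.g.) to $\pi$ being the only pre-$t$ read, and then prove the key lemma: \emph{every operation linearized before $t$ was invoked before the crash of $p_i$}. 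Indeed, any operation invoked after the crash is either a read (which, by the reduction, starts after $t$ and is linearized after $t$) or a write that finds \readann{} already $1$, hence a helping-write that returns, and is linearized, only after the gate is set. Consequently the prefix of the linearization order strictly before $t$ — already containing $\pi_1,\pi_2$ — can be assigned linearization instants inside the pre-crash portion of the execution, with $\pi$ slotted among them before the crash; the resulting sequential history is consistent with the \draf\ specification, so $\pi$ effectively takes effect before the crash and strict-linearizability holds. I expect the careful bookkeeping to be concentrated here: justifying the "only pre-$t$ read" reduction and checking that the relocated linearization still respects the real-time order $\rightarrow_{H^c}$.
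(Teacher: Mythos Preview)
Your proposal is correct and follows essentially the same approach as the paper's own proof: the same linearization function $f$ for linearizability, the same two-iteration bound plus $\mathbb{N}$-failed-\CAS\ argument for wait-freedom, the same appeal to Theorem~4.18 of~\cite{help15} for linearization-helping via \draf\ being exact-order, the same gate-triggered batch-linearization for universal-helping, and the identical case analysis for strict-linearizability culminating in the key lemma that every operation linearized before $t$ was invoked before the crash of $p_i$. Your identification of the ``hard case'' and the w.l.o.g.\ reduction to a single pre-$t$ read matches the paper exactly.
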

\begin{lemma}
\label{lm:SL-vs-LH}
    An implementation being strict-linearizable in the crash-recovery model is independent of it satisfying linearization-helping in the crash-stop model.
\end{lemma}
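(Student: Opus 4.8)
The plan is to establish Lemma~\ref{lm:SL-vs-LH} by exhibiting two independent implementations that together witness the claimed independence of the two properties. Recall that Lemma~\ref{lm:SL-vs-LH} asserts that ``strict-linearizable in the crash-recovery model'' neither implies nor is implied by ``linearization-helping in the crash-stop model.'' Since both properties are present or absent as stated, we need witnesses covering the relevant corners of the truth table, and the two claims just proved supply them directly. Concretely, I would first invoke Claim~\ref{cl:help1}: the $k$-bounded Counter implementation (Algorithm~\ref{alg:FAI}) is wait-free and strict-linearizable in the individual crash-recovery model, yet it is linearization-help free in the crash-stop model. This shows that strict-linearizability does \emph{not} force linearization-helping --- a strict-linearizable implementation can lack linearization-helping. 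Then I would invoke Claim~\ref{cl:DRAF}: the \draf\ implementation (Algorithm~\ref{alg:DRAF}) is linearizable, wait-free, has linearization-helping (and universal-helping) in the crash-stop model, and is also strict-linearizable in the individual crash-recovery model. This shows the converse direction: having linearization-helping does \emph{not} preclude strict-linearizability, so linearization-helping does not imply failure of strict-linearizability, and conversely strict-linearizability does not preclude linearization-helping.

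The second part of the argument, showing that linearization-helping does not imply strict-linearizability, requires a witness that has linearization-helping in the crash-stop model but is \emph{not} strict-linearizable in the crash-recovery model. Here I would point to the separation established elsewhere in the paper: the sticky-bit style construction (referenced in the introduction as Lemma~\tsref{lm:sticky}) or, more robustly, the general observation in Section~\tsref{sec:HelpFree-SL} that standard helping-based non-blocking implementations (e.g., universal constructions or the exact-order-type implementations guaranteed by Theorem~4.18 of \cite{help15}) allow an operation of a crashed process to be linearized by a later step of another process, which directly violates strict-linearizability. Any such implementation with linearization-helping that fails strict-linearizability completes the picture. Putting the pieces together: Claim~\ref{cl:help1} gives (strict-linearizable, help-free); Claim~\ref{cl:DRAF} gives (strict-linearizable, has-helping); and the sticky-bit/exact-order witness gives (not strict-linearizable, has-helping). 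These jointly show neither property implies the other, i.e., the two are independent.

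I expect the main obstacle to be not the logical bookkeeping but ensuring the cited witnesses are invoked at exactly the right granularity of model (individual vs.\ system-wide crash-recovery, old vs.\ new identifiers) and that each witness genuinely realizes the intended corner of the truth table without hidden dependence. In particular, I would be careful that the ``has linearization-helping but not strict-linearizable'' witness is stated for a crash-recovery model consistent with the one in which Claims~\ref{cl:help1} and~\ref{cl:DRAF} are phrased, so that the independence statement is uniform; since the paper fixes attention on the new identifiers model throughout and notes impossibility results transfer to the old identifiers model, this should go through, but it is the step that most needs explicit care. The remainder is a short synthesis: the lemma follows immediately by combining the three witnesses, with no further construction required.

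\begin{proof}
Immediate from Claims~\ref{cl:help1} and~\ref{cl:DRAF} together with the separation between linearization-helping and strict-linearizability. Claim~\ref{cl:help1} exhibits a wait-free implementation that is strict-linearizable in the individual crash-recovery model and linearization-help free in the crash-stop model; hence strict-linearizability does not imply linearization-helping. Claim~\ref{cl:DRAF} exhibits a wait-free implementation that has linearization-helping in the crash-stop model and is strict-linearizable in the individual crash-recovery model; hence linearization-helping is compatible with strict-linearizability, and in particular the presence of linearization-helping does not preclude strict-linearizability. Finally, standard helping-based non-blocking implementations of exact-order types (which have linearization-helping by Theorem~4.18 of~\cite{help15}) are not strict-linearizable in the crash-recovery model, since an operation of a crashed process may be linearized by a later step of another process; hence linearization-helping does not imply strict-linearizability. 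Combining these, neither property implies the other.
\end{proof}
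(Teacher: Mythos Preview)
Your core argument matches the paper: the proof there is the one-liner ``Follows directly from Claims~\ref{cl:help1} and~\ref{cl:DRAF}.'' In context (the section opens by saying ``a strict-linearizable implementation in the crash-recovery model can either have or not have linearization-helping''), the intended meaning of ``independent'' here is only that strict-linearizability is compatible with both the presence and the absence of linearization-helping. Claims~\ref{cl:help1} and~\ref{cl:DRAF} give exactly those two witnesses, and that is all the paper uses.

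You over-read the statement as full logical independence and therefore add a third witness (linearization-helping but not strict-linearizable). That extra material is not needed for this lemma; the paper defers the full four-corner independence to the separate unlabeled lemma after Claim~\ref{cl:sticky}. More importantly, your justification for the third witness is shaky: you argue that helping-based implementations ``are not strict-linearizable \ldots\ since an operation of a crashed process may be linearized by a later step of another process,'' but this is exactly the intuition that Claim~\ref{cl:DRAF} refutes---\draf\ has linearization-helping yet is strict-linearizable. To make that corner rigorous you would need a concrete example (the paper uses the BST of Ellen et al.\ in the later lemma), not the generic ``helping means linearization after the crash'' argument. For the present lemma, simply citing Claims~\ref{cl:help1} and~\ref{cl:DRAF} suffices.
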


\begin{proof}
    Follows directly from Claims \tsref{cl:help1} and \tsref{cl:DRAF}.
\end{proof}
\section{Help-freedom vs. Strict-linearizability}
\label{sec:HelpFree-SL}

In this section we prove an implementation can have no helping (both linearization and universal helping) while still being not strict-linearizable. The result is counter intuitive, since no linearization-helping seems to imply only the owner of an operation $p$ can cause it to be linearized by its own step. Hence, in case of a crash, either the pending operation was already linearized by a step of $p$, or that it is yet to be linearized and $p$ takes no more steps after the crash, thus the operation will have no linearization point. As we prove, forcing a linearization function to have no decided-before relation between two operations, even after the operations are linearized and completed, allows us to derive such a counter-example.

We then preclude such behaviours by posing a condition on the linearization function. In a nutshell, we consider only functions such that after two operations are linearized, in any extending execution the linearization function must linearized both in the same order. We note that for the best of our knowledge, any known linearizable implementation has such a linearization function.
%this restriction is compatible with linearizability, since an implementation is linearizable if and only if such a linearization function exists.
We prove that under this restriction, linearization-help free indeed implies strict-linearizability.

\subsection{Sticky-Bit Object}
\label{sec:eq2}

\newcommand{\vall}{\textsf{val}}
\newcommand{\ann}{\textsf{ann}}

A \sticky\ object type $\tau$ is the most simple form of multi-reader multi-writer register. Its value is initially 0, and it supports \setop\ and \readop\ operations. A \setop\ operations returns $ack$, while a \readop\ operation returns 0 if there is no \setop\ preceding it in the history, and 1 otherwise.
Algorithm \tsref{alg-Sbit} presents a \sticky\ implementation $I$ using single bit registers, such that $I$ is linearizable, wait-free, linearization-help free, and universal-help free in the crash-stop model. However, as we prove, $I$ is not strict-linearizable in the system-wide crash-recovery model.

\setop\ operation simply writes 1 to \vall. However, $p_0$ and $p_1$ executes \setop\ in a different manner, by first announcing their operation by setting a bit in the \ann\ array, and only then writing 1 to \vall.
Two different \readop\ implementations are provided. For clarity, we refer to it in the code as \readopA\ and \readopB. We assume some processes use only \readopA, and all others uses only \readopB. As we prove, the claim holds as long as there is at least one process different then $p_0,p_1$ using \readopA, and at least one such process using \readopB. As such, we assume it holds, and do not specify the exact set of processes using each of the different \readop\ implementations.
Wait-freedom follows directly from the code.

\begin{algorithm}[]
        %\nonl
        %\removelatexerror
        \footnotesize
        
		\begin{flushleft}
		    \textbf{Shared variables}:\\
		        \hspace{6 mm} \vall\ - read/write register, initially 0 \\
    		    \hspace{6 mm} $\ann[2]$ - boolean array, initially \False
		\end{flushleft}
    
    \begin{multicols*}{2}
        
        \begin{procedure}[H]
			\caption{() \small $\setop\ ()$ \newline
			    \texttt{code for process $p_i \notin \{p_0, p_1\}$}}
			
			$\vall := 1$ \;
			\KwRet \ack \;
		\end{procedure}
		
		\begin{procedure}[H]
			\caption{() \small $\setop\ ()$ \newline
			    \texttt{code for process $p_i \in \{p_0, p_1\}$}}
			
			$\ann[i] := \True$ \;
			$\vall := 1$ \;
			\KwRet ack \;
		\end{procedure}
		
	\columnbreak
		
		\begin{procedure}[H]
			\caption{() \small $\readopA\ ()$ \newline
			    \texttt{code for process $p_i$}}
			
			$res := \vall$ \;
			\KwRet res \;
		\end{procedure}
		
		\begin{procedure}[H]
			\caption{() \small $\readopB\ ()$ \newline
			    \texttt{code for process $p_i$}}
			
			\uIf {$\ann[0] = true$ and $\ann[1] = true$} {
			    $\vall := 1$ \;
			}
			$res := \vall$ \;
			\KwRet res \;
		\end{procedure}
		
	\end{multicols*}
		
	\caption{\sticky\ Object}
	\label{alg-Sbit}
\end{algorithm}

%\begin{proof}

\proofparagraph{Linearizability}
A \readop\ operation is linearized at the point where it reads \vall. For \setop\ operation we consider two different cases, based on the first write to \vall\ by some process $p_i$ performing an operation $\pi$.

Case I -- $\pi$ is a \setop\ operation. Then, we linearize any \setop\ operation on its write to \vall. As a result, any other \setop\ is linearized after $\pi$, a \readop\ returning 0 is linearized before $\pi$, while any \readop\ returning 1 is linearized after $\pi$.

Case II -- $\pi$ is a \readop\ operation, namely, it is \readopB. In such case, process $p_i$ observed both $\ann[0]$ and $\ann[1]$ are true, that is both $p_0$ and $p_1$ already invoked a \setop\ operation. Denote these operations by $\setop_0$, $\setop_1$ respectively. Since $p_i$ is the first to write to \vall, no \readop\ operation returned 1 before $p_i$ writes to \vall. We linearized $\setop_0$ at the write to \vall\ by $\pi$, while any other \setop\ is linearized on the step where it writes to \vall.
As in case I, a \readop\ operation reading \vall\ before it was first written is linearized before $\setop_0$ and returns 0, while any other \readop\ operation is linearized after this point and returns 1. This concludes the linearizability proof.

\proofparagraph{Universal-help Free}
Consider an execution $E$ in which $p_0$ invokes a \setop\ operation, writes 1 to $\ann[0]$ and halts. Follow that process $p \neq p_0$ performs an infinite sequence of \readopA\ operations. Then, all \readop\ operations return 0, thus the \setop\ operation is pending and has no linearization point during the entire execution. Hence Algorithm \tsref{alg-Sbit} is universal-help free.

\proofparagraph{Linearization-help Free}
Consider some execution $E$ of the implementation. Following the linearization points defined above, in case I any operation is linearized on a step by the process performing it, and thus $E$ is linearization-help free. Nevertheless, in case II, a step by process $p_i$ performing \readopB\ operation results a linearization point for a \setop\ operation by $p_0$, while any other operation is linearized on a step by the process performing it. This seems to imply linearization-helping. However, the same argument holds for a linearization order in which we choose to linearize $\setop_1$ instead of $\setop_0$ at the first write to \vall. Notice that changing the linearization order in this manner does not effect any other operation. Therefore, we can choose a linearization function $f$ such that if case II holds, in some extensions $\setop_0$ is linearized first, while in others $\setop_1$. For example, based on the number of operations in the execution (odd or even).

Under the linearization function $f$, $E$ has both an extension where $\setop_0$ is linearize before $\setop_1$, and vice versa. Hence, by definition $\setop_0$ is not decided before $\setop_1$ at any point. Moreover, according to $f$, before $\setop_0$ writes to \vall\ it can be linearized before any other operation that is yet to read or write \vall. Once $p_0$ writes to \vall\ it determines the last possible linearization point (in any extension) for $\setop_0$. In other words, only a step by $p_0$ can cause $\setop_0$ to be decided before some other operation. The same argument applies to $\setop_1$. Thus, function $f$ proves $I$ is linearization-help free.

\proofparagraph{Strict-Linearizability}
Consider the following execution $E$ of the implementation $I$ in the crash-recovery model. Each of $p_0, p_1$ invokes a \setop\ operation and writes to \ann, followed by a system-wide crash. Upon recovery, some process $p_i$ invokes and complete a \readopA\ operation, thus returning 0. Follow that, some other process $p_j$ performs \readopB\ operation to completion and returns 1. It follows that in any linearization order for $E$ one of the \setop\ operations of either $p_0$ or $p_1$ (or both) must be linearized in-between the two \readop\ operations of $p_i$ and $p_j$, and in particular, after the crash point. This contradicts strict-linearizability.
%\end{proof}
%
\begin{claim}
\label{cl:sticky}
There exists a wait-free linearizable implementation $I$ of an object type $\tau$ such that $I$ is linearization-help free and universal-help free in the crash-stop model, while $I$ is not strict-linearizable in the system-wide crash-recovery model.    
\end{claim}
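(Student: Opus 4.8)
The plan is to verify four properties of the implementation $I$ of the \sticky\ type $\tau$ given by Algorithm~\ref{alg-Sbit} — wait-freedom, linearizability, linearization-help freedom, universal-help freedom — and then exhibit an execution in the system-wide crash-recovery model with no strict-linearization. Wait-freedom is immediate, since every procedure is straight-line code with no loops. For linearizability I would argue by case analysis on which operation performs the \emph{first} write to \vall. If that operation is a \setop\ $\pi$, linearize every \setop\ at its own write to \vall\ and every \readop\ at its own read of \vall; a \readop\ reading \vall\ before $\pi$'s write returns $0$ and is placed before $\pi$, all other \readop s return $1$ and are placed after $\pi$, which is consistent with $\tau$. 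If the first writer is a \readopB\ operation, then at that step $\ann[0]=\ann[1]=\True$, so both $p_0$ and $p_1$ have already invoked \setop; place one of these (say $\setop_0$) at that write to \vall, every other \setop\ at its own write, and the \readop s as before. In both cases the induced sequential history respects real-time order and matches $\tau$.

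For universal-help freedom I would exhibit a single bad execution: $p_0$ invokes \setop, writes $\ann[0]:=\True$, and halts; thereafter some process $p\neq p_0$ runs an infinite sequence of \readopA\ operations. No process ever writes \vall\ (the halted $p_0$ never reaches $\vall:=1$, \readopA\ never writes, and no \readopB\ runs), so every \readopA\ returns $0$ and hence, in every linearization, the pending \setop\ of $p_0$ is never linearized — violating the requirement that all pending invocations eventually be linearized.

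The main obstacle is linearization-help freedom, because the natural linearization appears to \emph{help}: in the second case above a step of a \readopB\ process forces a \setop\ of $p_0$ — which took no step there — to be linearized. The key is that the definition quantifies over the \emph{existence} of a linearization function satisfying the decided-before condition, so I may choose an ``ambiguous'' one. Concretely, fix $f$ that, whenever the \readopB\ case occurs, places either $\setop_0$ or $\setop_1$ at the first write to \vall\ according to some parity of the execution (e.g.\ the number of events so far), with the other \setop\ placed at its own write; this choice changes no other operation's response. Then any execution $E$ has extensions under $f$ with $\setop_0 <_{f} \setop_1$ and extensions with $\setop_1 <_{f} \setop_0$, so neither of $\setop_0,\setop_1$ is ever decided before the other. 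Moreover, before $p_0$ writes to \vall, $f$ can still linearize $\setop_0$ before any operation that has not yet accessed \vall, so only $p_0$'s own write pins down $\setop_0$'s latest linearization point; symmetrically for $\setop_1$. Every other operation (a \setop\ outside $\{\setop_0,\setop_1\}$, or any \readop) is linearized at its own step by construction. Hence every decided-before relation is created by a step of the operation's own process, and $f$ witnesses linearization-help freedom.

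Finally, for the failure of strict-linearizability, I would construct the execution $E$ in which $p_0$ and $p_1$ each invoke \setop\ and write their $\ann$ bits, then a system-wide crash occurs; upon recovery some $p_i\notin\{p_0,p_1\}$ runs a complete \readopA, returning $0$ (since \vall\ was never written before the crash), and then some $p_j$ runs a complete \readopB, which sees $\ann[0]=\ann[1]=\True$, writes $\vall:=1$, and returns $1$. In any sequential history consistent with $\tau$ that respects the real-time order of $E$, the \readopA\ returning $0$ must precede every \setop\ while the \readopB\ returning $1$ must follow some \setop; hence at least one of $\setop_0,\setop_1$ is ordered strictly between the response of $p_i$'s \readopA\ and the invocation of $p_j$'s \readopB — in particular after the crash step. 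Since a strict completion may only insert responses before the crash step and must then delete the crash step and remaining pending operations, no strict-linearization of $E$ exists. Combining the four parts yields the claim.
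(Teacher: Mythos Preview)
Your proposal is correct and follows essentially the same approach as the paper: the same \sticky\ implementation, the same case split for linearizability, the same infinite-\readopA\ execution for universal-help freedom, the same parity-based linearization function $f$ to avoid any decided-before relation between $\setop_0$ and $\setop_1$, and the same crash-then-\readopA-then-\readopB\ execution to refute strict-linearizability.
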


\begin{remark}
    Implementation $I$ being not strict-linearizable in the system-wide new identifiers crash-recovery model imply $I$ is also not strict-linearizable in all other crash-recovery models - individual (system-wide) new (old) identifiers crash-recovery model.
    Therefore, Lemma \tsref{lm:sticky} holds for all crash-recovery models.
\end{remark}

\begin{lemma}
    Strict-linearizability and linearization-helping are independent
\end{lemma}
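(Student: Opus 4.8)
The plan is to derive the statement immediately from the two separation results already established in this section, in exactly the same style as Lemma~\tsref{lm:SL-vs-LH}. The phrase ``strict-linearizability and linearization-helping are independent'' means two things: (a) there is an implementation that is strict-linearizable in the crash-recovery model yet has linearization-helping in the crash-stop model, and (b) there is an implementation that is strict-linearizable in the crash-recovery model yet is linearization-help free in the crash-stop model --- equivalently, strict-linearizability neither implies linearization-helping nor implies its negation. Conversely, to match the word ``independent'' fully, one should also note that linearization-helping does not force strict-linearizability and help-freedom does not force it either, which is precisely the content of Claim~\tsref{cl:sticky}.

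First I would invoke Claim~\tsref{cl:DRAF}: the \draf\ implementation $A$ is wait-free, linearizable, has both linearization-helping and universal-helping in the crash-stop model, and is strict-linearizable in the individual crash-recovery model. This witnesses that a strict-linearizable implementation may have linearization-helping, so strict-linearizability does not preclude linearization-helping. Next I would invoke Claim~\tsref{cl:help1}: the \texttt{$k$-bounded Counter} implementation is wait-free, strict-linearizable in the individual crash-recovery model, and linearization-help free in the crash-stop model. This witnesses that a strict-linearizable implementation may also be linearization-help free. Together these two claims show strict-linearizability is compatible with either the presence or the absence of linearization-helping, hence the two properties are not tied to one another.

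Finally, for the reverse direction I would invoke Claim~\tsref{cl:sticky}: the \sticky\ implementation $I$ is wait-free, linearizable, linearization-help free (and universal-help free) in the crash-stop model, yet not strict-linearizable in the system-wide crash-recovery model. This shows that linearization-help freedom does not imply strict-linearizability, so the implication does not go in that direction either. Combining all three claims, neither property implies the other, which is exactly the assertion of independence. So the proof is simply:

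\begin{proof}
    Claim~\tsref{cl:DRAF} exhibits an implementation that is strict-linearizable in the crash-recovery model and has linearization-helping in the crash-stop model, while Claim~\tsref{cl:help1} exhibits one that is strict-linearizable in the crash-recovery model and is linearization-help free in the crash-stop model; hence strict-linearizability implies neither linearization-helping nor its negation. Conversely, Claim~\tsref{cl:sticky} exhibits an implementation that is linearization-help free in the crash-stop model but not strict-linearizable in the crash-recovery model, so linearization-help freedom does not imply strict-linearizability. Therefore the two properties are independent.
\end{proof}

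\textbf{Main obstacle.} There is essentially no technical obstacle here: all the heavy lifting is done in Claims~\tsref{cl:help1}, \tsref{cl:DRAF} and \tsref{cl:sticky}. The only thing that requires care is being explicit about what ``independent'' is being claimed --- namely that the implications fail in \emph{both} directions --- and citing the right claim for each direction, so that the one-line proof actually covers the full statement rather than just one half of it. (Note also this lemma largely restates Lemma~\tsref{lm:SL-vs-LH} together with the consequence of Claim~\tsref{cl:sticky}; if the intent is to strengthen it, the additional content is exactly the ``help-freedom does not imply strict-linearizability'' direction supplied by the \sticky\ object.)
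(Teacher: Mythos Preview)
Your proof has a genuine gap: it does not cover all four combinations that ``independence'' requires. You exhibit (i) strict-linearizable with linearization-helping (Claim~\ref{cl:DRAF}), (ii) strict-linearizable and linearization-help free (Claim~\ref{cl:help1}), and (iii) not strict-linearizable and linearization-help free (Claim~\ref{cl:sticky}); but you never exhibit (iv) an implementation that \emph{has} linearization-helping and is \emph{not} strict-linearizable. Without (iv) you have not ruled out the implication ``linearization-helping $\Rightarrow$ strict-linearizability'', so the claim of full independence is not established. Your own planning paragraph even flags this direction --- ``linearization-helping does not force strict-linearizability'' --- but then asserts that Claim~\ref{cl:sticky} supplies it; it does not, since the \sticky\ implementation is linearization-help \emph{free}.

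The paper's proof handles this missing case explicitly: after citing Lemma~\ref{lm:SL-vs-LH} for the two strict-linearizable cases and Claim~\ref{cl:sticky} for case (iii), it invokes a well-known implementation with linearization-helping that fails strict-linearizability --- the Binary Search Tree of Ellen et al.~\cite{EllenFRB10}, where an update marks a node and may later be completed by another process; crashing after the mark forces the operation to be linearized after the crash. Adding any such example (or, alternatively, Herlihy's universal construction applied to a stack, as used later in the paper) closes the gap.
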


\begin{proofsketch}
        Lemma~\ref{lm:SL-vs-LH} proves an implementation can be strict-linearizable in the crash-recovery model while having or not linearization-helping in the crash-stop model. For the not strict-linearizable case, Claim~\ref{cl:sticky} proves an implementation can be not strict-linearizable in the crash-recovery model, while being linearization-help free in the crash-stop model. The last case remains is an implementation being not strict-linearizable in the crash-recovery model, while satisfy linearization-helping in the crash-stop model. This case is almost trivial, and can be proven using many known implementations with linearization-helping. For example, in the Binary Search Tree implementation of Ellen et al.~\cite{EllenFRB10}, an update operation may mark a node, and later in the execution the operation can be completed by a different process. To prove the implementation is not strict-linearizable, consider the scenario where the process crash after the marking.
\end{proofsketch}

\subsection{An Equivalence between Linearizability and Strict-linearizability}
The formalization of linearization-helping specifies a linearization function $f$ such that for any history $H$ it produce its linearization order $f(H)$. 
%However, except for linearization no other restriction are required from $f$. 
As the sticky-bit implementation demonstrates, the formalization of linearization-helping leads to executions where the decided-before order is not well-defined, and this may break the concept of helping leading to non-intuitive results.
Specifically, $f$ can linearize operations of some history $H$ in different order for different extensions of $H$.
For example, consider a stack implementation, and a history $H$ where two \emph{pop} operations $\pi_1, \pi_2$ are executed concurrently to completion starting from the initial configuration. Both operations returns empty, and can be linearized in any order. Therefore, in different extensions of $H$, $f$ may sometimes linearize $\pi_1$ before $\pi_2$ and sometimes vice versa. Thus, although both operations completed, there is no decided-before order between the two.

In this section we restrict the discussion to a more natural \emph{prefix-respecting} linearization function precluding such a behaviour. We emphasize that to the best of our knowledge, any known implementation has such a linearization function.

\begin{definition}
We say that a linearization function $f$ is \emph{prefix-respecting} if for any execution $E$ and an execution $F$ extending it, $<_{f(E)} \subseteq <_{f(F)}$. In other words, for any two operations $\pi_1,\pi_2 \in E$, if $\pi_1 <_{f(E)} \pi_2$ then $\pi_1 <_{f(F)} \pi_2$.
%$f(F)$ agrees on the order of operations linearized in $f(E)$.
%the following holds: for any two operations $\pi_1, \pi_2$, if $\pi_1$ precedes $\pi_2$ in $f(E)$ (and in particular, both appears in $f(E)$), then $\pi_1$ precedes $\pi_2$ in $f(F)$.
\end{definition}
Strong-linearizability \cite{golab-strong} requires the linearization order of an execution $E$ to be a prefix of the linearization order of any extension of $E$. Although this seems to bear similarity to prefix-respecting, strong-linearizability is a more restrict requirement.
Given an execution $E$, strong-linearizability requires that if an operation $\pi$ is linearized in $f(E)$, then in any extending execution no operation $\sigma$ that is pending in $E$ can be linearized before $\pi$, even if $\sigma$ was invoked before $\pi$ and it is not linearized in $f(E)$. On the other hand, prefix-respecting allows such a scenario, as long as you do not change the linearization order of operations that are already linearized in $f(E)$.

A key feature of prefix-respecting linearization function is that given an execution $E$, if some operation $\pi_1$ is linearized before $\pi_2$ in $f(E)$, then it holds that $\pi_1$ is decided before $\pi_2$ in $E$ (with respect to $f$). Hence, and by abuse of notation, we may say $\pi_1$ is linearized before $\pi_2$ while referring to the decided before order.
Restricting our discussion to linearization-helping based on prefix-respecting linearization functions only, we prove that a linearization-help free implementation $A$ of an object type $\tau$ in the crash-stop model is also strict-linearizable in the crash-recovery model. 
Roughly speaking, under this restriction, if $A$ is linearization-help free and some operation $\pi$ crashes, then either it has a linearization point before the crash, or that it has no linearization point in any extending execution, as steps by other processes can not cause it to be linearized.

%The proof applies to a set of objects for which each process can apply any operation at any point of the abstract history. An example for such an object is snapshot, stack and queue. To the best of our knowledge, any known long-lived object satisfy this condition. An example for an object that does not satisfy the condition can be drawn by restricting known object. For example, consider a stack where process $p$ is allowed to perform $Push(i)$ only after completing $Push(i-1)$. In such object, a process is not allowed to invoke $Push(2)$ as its first operation.

%\ohad{move total objects definition to the model section?}
%The proof applies to \emph{total} object types and implementations. 
%
\begin{figure}
    \centering
    \includegraphics[width=\textwidth]{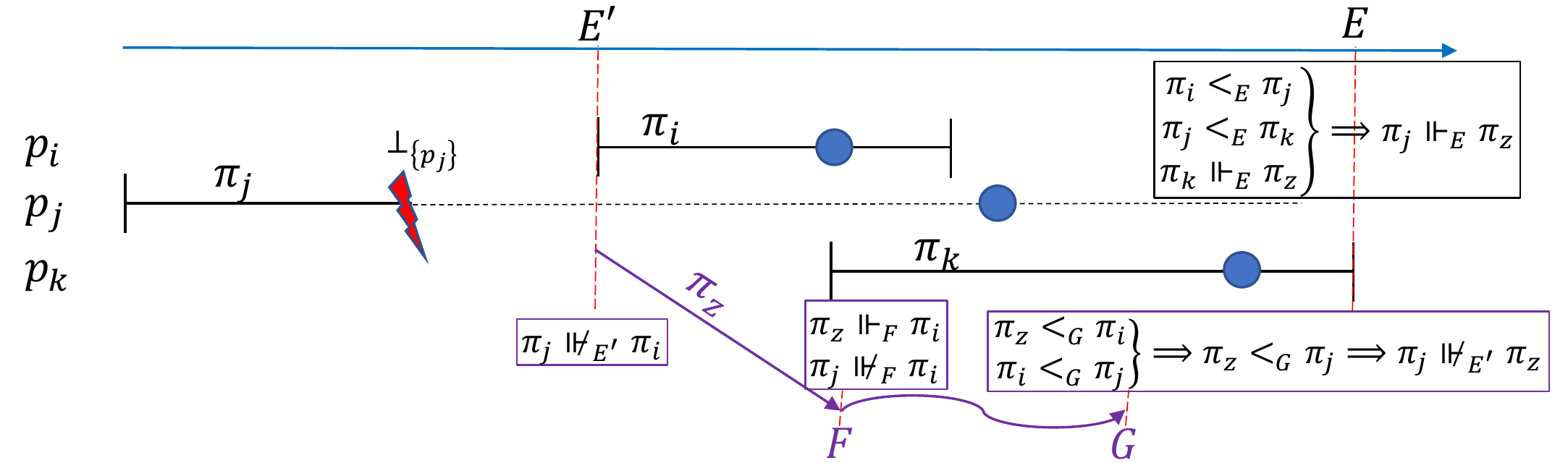}
    \caption{\small Illustration of Lemma \tsref{lm:prefix-strict} proof. Execution $E$ is illustrated by black lines, where blue dots are linearization points according to $f$. Purple elements describes the construction in the proof. Each frame contains the properties of the respective resulted execution.}
    \label{fig:prefix}
\end{figure}

\begin{lemma}
    \label{lm:prefix-strict}
    Let $A$ be an obstruction-free implementation of a total object type $\tau$ such that $A$ is not strict-linearizable in the individual crash-recovery model.
    Then, any prefix-respecting linearization function $f$ of $A$ imply linearization-helping in the crash-stop model.
\end{lemma}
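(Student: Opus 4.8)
The plan is to prove the contrapositive-flavored statement directly: assuming $A$ is obstruction-free, total, and \emph{not} strict-linearizable in the individual crash-recovery model, I will take an arbitrary prefix-respecting linearization function $f$ of $A$ (for the crash-stop histories) and exhibit an execution, two operations, and a single step witnessing linearization-helping with respect to $f$. The starting point is a crash-recovery execution $E$ of $A$ together with some completion/linearization that \emph{fails} strict-linearizability: there must be some operation $\pi$ by a process $p_i$ that crashes in $E$ such that in every valid linearization of (a strict completion of) $E$, $\pi$ is either forced to be linearized after its crash event, or its presence/absence is forced in a way inconsistent with $\tau$'s sequential specification when truncated at the crash. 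The key reduction is to turn this crash-recovery obstacle into a purely crash-stop phenomenon: since we work in the new identifiers model, the prefix of $E$ up to the crash of $p_i$, call it $E_1$, is also a legal crash-stop execution in which $\pi$ is pending; and because $A$ is obstruction-free and $\tau$ is total, $p_i$ running solo from $E_1$ would complete $\pi$ with some response.

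The heart of the argument is a \textbf{case analysis on where $f$ linearizes $\pi$ relative to the crash}. First I would consider the crash-stop execution $E_1\cdot S_i$ where $S_i$ is the solo completion of $\pi$ by $p_i$; by prefix-respectingness, whatever $<_{f}$-order $f$ fixes among operations in $E_1$ is already determined in $E_1$ and persists. If $f(E_1)$ already places $\pi$ with a definite decided-before relation to the other relevant operations of $E_1$ — i.e. $\pi$ is "effectively linearized" before the crash — then $E$ together with this linearization would actually respect strict-linearizability (one could keep $\pi$ in the strict completion, inserted before the crash), contradicting our assumption; so $f$ must \emph{not} decide $\pi$'s order relative to some operation $\pi'$ within $E_1$. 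On the other hand, some extension of $E_1$ (e.g. the one matching $E$, or the solo run) \emph{does} decide $\pi$ vs.\ $\pi'$. Let $E_1\cdot\gamma_1\cdots\gamma_m$ be a minimal such extension and let $\gamma = \gamma_m$ be the first step at which the decided-before relation between $\pi$ and $\pi'$ gets fixed. By definition of linearization-help-freedom, if $A$ were linearization-help free with witness $f$, then $\gamma$ would have to be a step of whichever of $\pi,\pi'$ is decided first, taken by its own owner. The remaining work is to show this cannot consistently hold for \emph{every} such extension while still $f$ being prefix-respecting and $A$ failing strict-linearizability — essentially, the step that "commits" $\pi$ must be performable by some other process (this is exactly what makes $E$ non-strict-linearizable: $p_i$ is gone, yet $\pi$'s fate still gets sealed), which is the definition of linearization-helping.

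To make the last step rigorous I would argue: in $E$, after the crash of $p_i$, process $p_i$ takes no steps; yet the failure of strict-linearizability means that in $E$ itself the order of $\pi$ relative to $\pi'$ (or the inclusion of $\pi$) is still being resolved by later events, none of which are steps of $p_i$. Concretely, pick the earliest step $\gamma$ in $E$ after the crash such that $\pi$ is decided before (or after) $\pi'$ in $E_{\le\gamma}$ but not in $E_{<\gamma}$; such $\gamma$ exists because otherwise $\pi$'s relation to $\pi'$ is already undecided in the full $E$, which (using totality and obstruction-freedom to complete all pending operations, and prefix-respectingness to freeze the order) would let us build a strict completion — contradiction. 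Then $\gamma$ is a step by some process $p_j \neq p_i$, and whichever of $\pi,\pi'$ is the one decided-first, it is not the case that $\gamma$ is a step of that operation by its own invoker (if $\pi$ is decided first, its invoker $p_i$ is crashed; if $\pi'$ is decided first, I would choose $\pi'$ and the setup so that $\gamma$ is not $p_j$'s-own-$\pi'$-step — this needs a careful choice of $\pi'$, e.g. an operation whose response in $E$ is affected by $\pi$). Hence $f$ fails the linearization-help-free condition at $(\pi,\pi',\gamma)$, and since $f$ was an arbitrary prefix-respecting linearization function, $A$ has linearization-helping.

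\textbf{The main obstacle} I expect is the bookkeeping around \emph{which} pair $(\pi, \pi')$ to fix and ensuring the "committing" step $\gamma$ is genuinely a helping step rather than a step of the decided-first operation's owner — strict-linearizability's failure gives us that $\pi$'s fate is sealed after $p_i$'s crash, but translating "$\pi$'s fate is sealed" into "there is a concrete pair of operations and a single step witnessing the decided-before transition, with the step belonging to the wrong process" requires carefully using totality (so all pending operations can be completed in any extension, forcing $f$ to eventually decide every pair), obstruction-freedom (so solo completions exist), and prefix-respectingness (so once decided, an order never flips, which is what lets us locate a \emph{single} committing step and prevents $f$ from wriggling out the way it did in the sticky-bit example). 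Managing the interaction of these three hypotheses in the extension argument — and handling the subcase where the obstruction is about $\pi$'s \emph{existence} in the completion rather than its \emph{order} — is where the real care is needed; Figure~\tsref{fig:prefix} presumably organizes exactly this construction.
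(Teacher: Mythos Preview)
Your high-level strategy is right --- locate a step after the crash at which the decided-before relation involving the crashed operation $\pi$ flips, and observe that this step cannot be by $\pi$'s owner --- but the proposal has a real gap precisely where you flag it as ``the main obstacle.'' The definition of linearization-help-freedom says: if $\pi_1$ becomes decided before $\pi_2$ at step $\gamma$, then $\gamma$ must be a step of $\pi_1$'s owner. So to get a contradiction you must arrange that the \emph{crashed} operation $\pi_j$ is the one that becomes decided-first. Your case split (``if $\pi$ is decided first\ldots; if $\pi'$ is decided first\ldots'') does not guarantee this, and you have no mechanism for the second branch.

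The paper resolves this with two ingredients you are missing. First, it extracts from the failure of strict-linearizability a \emph{sandwich} $\pi_i <_{f(E)} \pi_j <_{f(E)} \pi_k$ where $\pi_i$ is invoked \emph{after} the crash of $p_j$; the existence of such $\pi_i,\pi_k$ is argued separately (if no $\pi_i$, then $\pi_j$ could be linearized before the crash; if no $\pi_k$, then $\pi_j$ could be dropped). Second --- and this is the key trick --- it introduces a \emph{fresh} operation $\pi_z$ by a process $p_z$ that is idle at the point $E'$ where $\pi_i$ is invoked. One then shows (1) $\pi_j$ is \emph{not} decided before $\pi_z$ at $E'$: run $p_z$ solo to complete $\pi_z$, then use $\pi_j \nVdash \pi_i$ (which persists under $p_j$-free extensions by the help-free assumption) to find a further extension with $\pi_z < \pi_i < \pi_j$; and (2) $\pi_j$ \emph{is} decided before $\pi_z$ in $E$: because $\pi_j <_{f(E)} \pi_k$ and $\pi_k$ completes before $\pi_z$ is invoked, transitivity of $\Vdash$ gives $\pi_j \Vdash_E \pi_z$. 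Since the extension from $E'$ to $E$ is $p_j$-free, some step by a process other than $p_j$ caused $\pi_j$ to become decided before $\pi_z$ --- and now the direction is the right one, so this is exactly linearization-helping. The paper then has a three-phase cleanup to reduce the general case (other processes active, $\pi_i,\pi_k$ possibly pending) to this simple picture, repeatedly using obstruction-freedom for solo completions and the help-free hypothesis (property P4) to show the relevant $\nVdash$ relations survive $p_j$-free extensions. Your proposal does not have the fresh-$\pi_z$ device or the sandwich, and without them the ``which operation is decided first'' problem is not just bookkeeping --- it is the whole difficulty.
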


\begin{proof}
        
    In high-level,
    since $A$ is not strict-linearizable there exists an execution $E$ of $A$ contradicting strict-linearizability.
    Assume towards a contradiction there exists a prefix-respecting linearization function $f$ such that $A$ is linearization-help free with respect to $f$.
    $E$ does not satisfy strict-linearizability, thus there exists a crashed operation $\pi_j$ which is linearized after the crash of its owner process $p_j$ in $f(E)$. Consider some operation $\pi_z$ by a different process $p_z$. We prove that $\pi_j$ is not decided before $\pi_z$ when $p_j$ crash, while $\pi_j$ is decided before $\pi_z$ in $E$. However, after the crash of $p_j$ all steps are by processes different then $p_j$, thus contradicting linearization-help free.
    A detailed proof follows.
    
    %\ohad{the following paragraph should be in the model / beginning, elaborating why we can consider the same implementation and "same" executions in both models}
    %Moreover, since the model is atomic, there is no data lost in case of a crash.
    
    \proofparagraph{Notation and Observations}
    For the rest of the proof we fix the linearization function $f$. Therefore, all claims are with respect to $f$, and we do not mention it. In addition, by abuse of notation we write $\pi_1 <_F \pi_2$ to denote $\pi_1 <_{f(F)} \pi_2$.
    Given two operations $\pi_1,\pi_2$ we write $\pi_1 \Vdash_{F} \pi_2$ ($\pi_1 \nVdash_{F} \pi_2$) if $\pi_1$ is decided before (not decided before) $\pi_2$ in the execution $F$ with respect to $f$.
    We say that a process $p_i$ is \emph{idle} in execution $F$ if $p_i$ has no pending operation in $F$.
    Let $F$ be an execution, and $\pi_1, \pi_2$ be operations by $p_1, p_2$ respectively. The following claims hold:
    \addtolength\leftmargini{0.75em}
    \begin{enumerate}[label=\textbf{P\arabic*:}, ,ref=P\arabic*]
        \item   \label{prop:linearize-imply-decided}
                $\pi_1 <_F \pi_2$ imply $\pi_1 \Vdash_F \pi_2$. \\
               {\small Explanation: $f$ is prefix-respecting, thus lineatization order implies decided before order. \par}
        \item  \label{prop:not-decided-to-linearized}
                $\pi_1 \nVdash_F \pi_2$ implies there exists an extension of $F$ such that $\pi_2$ is linearized before $\pi_1$. \\
                {\small Explanation: follows immediately from the definition of decided before order. \par}
        \item   \label{prop:decided-is-transitive} 
                $\Vdash$ is transitive: $\pi_1 \Vdash_E \pi_2 \wedge \pi_2 \Vdash_E \pi_3 \Longrightarrow \pi_1 \Vdash \pi_3$.
                %{\small Explanation: if $\pi_1$ is decided before $\pi_2$, and $\pi_2$ is decided before $\pi_3$, then $\pi_1$ is decided before $\pi_3$. \par}
        \item   \label{prop:p-free-extension}
                $\pi_1 \nVdash_F \pi_2$, implies $\pi_1 \nVdash_{F \cdot F'} \pi_2$ for any $p_1$-free extension $F'$ of $F$. \\
                {\small Explanation: since $A$ is linearization-help free with respect to $f$, steps by processes different then $p_1$ can not cause $\pi_1$ to be decided before $\pi_2$. \par}
    \end{enumerate}
    
    \proofparagraph{Detailed Proof}
    $f(E)$ is a linearization of $E$,  hence a contradiction to strict-linearizability is due to a crashed operation $\pi_j$ by some process $p_j$. That is, $\pi_j$ is linearized after the crash of $p_j$ in $f(E)$. Next we prove this implies $\pi_i <_E \pi_j <_E \pi_k$ for some operations $\pi_i,\pi_k$, by processes $p_i,p_k$ respectively, such that $\pi_i$ was invoked after the crash of $p_j$.
    
    Assume towards a contradiction one of $\pi_i, \pi_k$ does not exists.
    If no such $\pi_i$ exists, then any operation that is linearized before $\pi_j$ in $f(E)$ was invoked before the crash of $p_j$. As a result, all those operations, including $\pi_j$, have a linearization point before the crash of $p_j$.
    If no such $\pi_k$ exists, then no operation is linearized after $\pi_j$ in $f(E)$. We consider a linearization order similar to $f(E)$ in which $\pi_j$ have no linearization point.
    We get a linearization order for $E$ in which $\pi_j$ is either linearized before the crash of $p_j$, or have no linearization point at all, contradicting the fact that $\pi_j$ must be linearized after its crash.
    
    Denote the execution up to the invocation of $\pi_i$ by $E'$.
    Let $p_z$ be an idle process in $E'$, and let $\pi_z$ be an operation by $p_z$. We prove (1) $\pi_j \nVdash_{E'} \pi_z$; (2) $\pi_j \Vdash_{E} \pi_z$. However, the extension leading from $E'$ to $E$ is $p_j$-free, since $p_j$ crashed in $E'$, contradicting \ref{prop:p-free-extension}.
    Figure \ref{fig:prefix} depicts the structure of the execution constructed by our proof.
    
    For simplicity, we first prove the lemma under the following two assumptions: (i) $\pi_i,\pi_k$ are completed in $E$; (ii) starting from $E'$ only $p_i$ and $p_k$ takes steps in $E$, while all other processes are idle.
    To prove (1) we construct an execution starting from $E'$ such that $\pi_z$ is linearized before $\pi_j$.
    Consider a solo run of $p_z$ starting from $E'$ such that $\pi_z$ is completed and linearized. Denote the resulted execution by $F$. Since $\pi_i$ is yet to be invoked in $F$ we get $\pi_z \Vdash_{F} \pi_i$.
    Notice that $\pi_j \nVdash_{E'} \pi_i$, since there is an extension of $E'$ leading to $E$ in which $\pi_i <_E \pi_j$. The extension leading from $E'$ to $F$ is $p_i$-free, thus by \ref{prop:p-free-extension} $\pi_j \nVdash_{F} \pi_i$. By \ref{prop:not-decided-to-linearized} there exists an extension of $F$ to an execution $G$ such that $\pi_i <_G \pi_j$. Since $\pi_z \Vdash_{F} \pi_i$, in any extension of $F$, and in particular in $G$, $\pi_z$ is linearized before $\pi_i$, which is linearized before $\pi_j$. %This completes the proof of (1).
    We next prove (2).
    Since $\pi_j <_E \pi_k$, by \ref{prop:linearize-imply-decided} $\pi_j \Vdash_E \pi_k$. By our assumption $\pi_k$ is completed in $E$ before $\pi_z$ is invoked, thus $\pi_k \Vdash_E \pi_z$. Since $\Vdash$ is transitive, $\pi_j \Vdash_E \pi_z$.
    
    Next, we prove the general case without the simplifying assumptions made above. To do so, we construct an execution in which the two assumptions holds. That is, after $E'$ only $p_i$ and $p_k$ may take steps while all other processes are idle, and both $\pi_i, \pi_k$ are completed. The construction proceeds in phases.
    
    \proofparagraph{Phase 1}
    Denote the execution up to the crash point of $p_j$ by $E_0$. Starting from $E_0$, as long as there is a pending operation $\pi$ by a process $p \notin \{p_i,p_j,p_k\}$, we let $p$ perform a solo run and complete its pending operation.
    Denote the resulted execution by $E_1$. By the construction, in $E_1$ all processes except maybe for $p_i,p_j,p_k$ are idle.
    Since there is an extension of $E_0$ such that $\pi_i <_E \pi_j <_E \pi_k$, we get $\pi_j \nVdash_{E_0} \pi_i$ and $\pi_k \nVdash_{E_0} \pi_j$. However, all steps in the extension of $E_0$ to $E_1$ are by processes different then $p_i,p_j,p_k$, thus by \ref{prop:p-free-extension} $\pi_j \nVdash_{E_1} \pi_i$ and $\pi_k \nVdash_{E_1} \pi_j$.
    
    \proofparagraph{Phase 2}
    Starting from $E_1$, let $p_i$ complete its pending operation, if such exists, followed by an execution of $\pi_i$ to completion. Denote the resulted execution by $E_2$.
    Since all steps in the extension from $E_1$ to $E_2$ are by $p_i$, following \ref{prop:p-free-extension} we have $\pi_j \nVdash_{E_2} \pi_i$ and $\pi_k \nVdash_{E_2} \pi_j$.
    We now prove that since $\pi_i$ is completed in $E_2$ and linearized, this implies $\pi_i \Vdash_{E_2} \pi_j$.
    
    Assume towards a contradiction $\pi_i \nVdash_{E_2} \pi_j$. Since $\pi_j \nVdash_{E_2} \pi_i$ holds, by \ref{prop:not-decided-to-linearized} there exists an extension of $E_2$ in which $\pi_i$ is linearized and decided before $\pi_j$. However, $\pi_i$ is completed and have no more steps to take. Thus, there is a step by an operation different then $\pi_i$ causing $\pi_i$ to be decided before $\pi_j$. This contradicts the fact that $f$ is linearization-help free.
    
    \proofparagraph{Phase 3}
    Finally, starting from $E_2$ we let $p_k$ complete any pending operation different then $\pi_k$, if such exists, followed by an execution of $\pi_k$ to completion in a solo-run manner. Denote the resulted execution by $E_3$.
    We prove $E_3$ satisfies all the required conditions. First, notice that $p_j$ crashed in $E_3$. In addition, starting from the invocation step of $\pi_i$, only $p_i$ and $p_j$ takes steps, while any other process is idle, and both $\pi_i,\pi_k$ are completed in $E_3$. It remains to prove $\pi_i <_{E_3} \pi_j <_{E_3} \pi_k$.
    Since $E_3$ is an extension of $E_2$ we have $\pi_i \Vdash_{E_3} \pi_j$. As we next prove, $\pi_j \Vdash_{E_3} \pi_k$. Moreover, by the construction $\pi_i <_{E_3} \pi_k$. 
    This proves $\pi_j$ must be linearized between $\pi_i$ and $\pi_k$.  This completes the proof.
    
    To prove $\pi_j \Vdash_{E_3} \pi_k$, we prove it holds in $E_0$, and thus in any extension of $E_0$.
    Assume towards a contradiction $\pi_j \nVdash_{E_0} \pi_k$. Since $p_j$ crashed in $E_0$, the extension leading to $E$ is $p_j$-free. By \ref{prop:p-free-extension} $\pi_j \nVdash_{E} \pi_k$, contradicting the fact that $\pi_j <_E \pi_k$.

    \remove{
    $f(E)$ is a linearization of $E$,  hence a contradiction to strict-linearizability is due to a crashed operation $\pi_j$ by some process $p_j$. That is, $\pi_j$ is linearized after the crash of $p_j$ in $f(E)$. Next we prove this implies $\pi_i <_E \pi_j <_E \pi_k$ for some operations $\pi_i,\pi_k$, by processes $p_i,p_k$ respectively, where $\pi_i$ was invoked after the crash of $p_j$.
    
    Assume towards a contradiction no such $\pi_i$ exists. Hence, any operation that is linearized before $\pi_j$ in $f(E)$ was invoked before the crash of $p_j$. As a result, all those operations, including $\pi_j$, have a linearization point before the crash of $p_j$.
    Assume towards a contradiction no such $\pi_k$ exists. Hence no operation is linearized after $\pi_j$ in $f(E)$, and we consider a similar linearization order to $f(E)$ in which $\pi_j$ have no linearization point.
    We get a linearization order for $E$ in which $\pi_j$ is either linearized before the crash of $p_j$, or have no linearization point at all, thus contradicting the fact that $\pi_j$ must be linearized after its crash.
    
    Denote the execution up to the invocation of $\pi_i$ by $E'$.
    For simplicity, we first prove the claim for the following simple case -- starting from $E'$ only $p_i$ and $p_k$ takes steps in $E$, while all other processes are idle.
    In $E'$, $\pi_i$ is not decided before $\pi_j$, since a solo run of $p_j$ results an execution in which $\pi_j$ completes and is linearized before $\pi_i$ is invoked, while there is a different extension leading to $E$ in which $\pi_i$ is linearized before $\pi_j$.
    
    Let $p_z$ be an idle process in $E'$, and let $\pi_z$ be any operation by $p_z$.
    For similar reasons, $\pi_z$ is not decided before $\pi_j$ in $E'$. A solo run of $p_j$ such that it completes $\pi_j$ before $\pi_z$ is invoked results $\pi_j$ is linearized before $\pi_z$. On the other hand, let $p_z$ perform a solo run starting for $E'$ such that $\pi_z$ is completed. %and linearized before $\pi_i$.
    Denote the resulted execution by $E''$. Since $A$ is linearization-help free by assumption, it holds that $\pi_j$ is not decided before $\pi_i$ in $E''$, otherwise a step by $p_z$ caused it, in contradiction. Thus, there is an extension of $E''$ in which $\pi_i$ is linearized before $\pi_j$, and in particular $\pi_z$ is linearized before both.
    
    Consider now the execution $E$ again. w.l.o.g. we assume $\pi_k$ is completed in $E$, otherwise we let $p_k$ performs a solo-run and complete its operation.
    The following holds in $E$ -- (1) $\pi_j$ is linearized before $\pi_k$; (2) $\pi_k$ is decided before $\pi_z$, since $\pi_k$ is completed and  $\pi_z$ has yet to be invoked;
    $f$ is a prefix-respecting linearization function, hence in any extension of $E$ $\pi_j$ is linearized before $\pi_k$ which is linearized before $\pi_z$. As a result, $\pi_j$ is decided before $\pi_z$ in $E$.
    
    To summarize, $\pi_j$ is not decided before $\pi_z$ in $E'$, while it is decided before $\pi_z$ in $E$. However,the execution leading from $E'$ to $E$ is $p_j$-free, since $p_j$ crashed in $E'$. Therefore there is be a step by process different then $p_j$ which caused $\pi_j$ to be decided before $\pi_z$, contradicting linearization-help free.
    This completes the proof for the simple case.
    
    For the general case, we construct an execution in which all the assumed conditions holds. That is, after $E'$ only $p_i$ and $p_k$ may take steps while all other processes are idle, and both $\pi_i, \pi_k$ are completed. The construction proceeds in phases.
    
    \proofparagraph{Phase 1:}
    Starting from the crash point of $p_j$, as long as there is a pending operation $\pi$ by a process $p \notin \{p_i,p_j,p_k\}$, we let $p$ perform a solo run and complete its pending operation.
    In the resulted execution, denoted by $E_1$, all processes except for $p_i,p_j,p_k$ have completed their operations and are idle.
    Moreover, since $f$ is linearization-help free, steps by $p$ can not effect the decided-before order of other operations. 
    Thus, $\pi_j$ is not decided before $\pi_i$, and $\pi_k$ is not decided before $\pi_j$ in $E_1$.
    
    \proofparagraph{Phase 2:}
    Starting from $E_1$, let $p_i$ complete its pending operation, if such exists, followed by an execution of $\pi_i$ to completion in a solo-run manner. Denote the resulted execution by $E_2$.
    Notice that $\pi_i$ is linearized before $\pi_j$ in $f(E_2)$. Since $\pi_j$ is not decided before $\pi_i$ in $E_1$, a step by $p_i$ can not cause $\pi_j$ to be decided and linearize before $\pi_i$, since $f$ is linearization-help free. Moreover, as before, steps by $p_i$ does not effect the decided-before order of $\pi_j$ and $\pi_k$, thus $\pi_k$ is not decided before $\pi_j$ in $E_2$.
    
    \proofparagraph{Phase 3:}
    Finally, starting from $E_2$ we let $p_k$ complete any pending operation different then $\pi_k$, if such exists, followed by an execution of $\pi_k$ to completion in a solo-run manner. This results an execution $E_3$ in which $\pi_j$ is linearized and decided before $\pi_k$ as we next proves, thus completing the proof.
    
    To prove $\pi_j$ is decided before $\pi_k$ in $E_3$, we prove it already holds at the crash point of $p_j$, thus it holds in any extension of it. If $\pi_j$ is not decided before $\pi_k$ at the crash point, then there is a $p_j$-free extension leading to $E$, such that $\pi_j$ is linearized and decided before $\pi_k$ in $f(E)$.
    Since all steps in $E$ after the crash of $p_j$ are by processes different then $p_j$, a step by some process different then $p_j$ caused $\pi_j$ to be decided before $\pi_k$, contradicting linearization-help free.
}
\end{proof}

\begin{corollary}
\label{lm:prefix}
    Let $A$ be an obstruction-free implementation of a total object type $\tau$ such that $A$ is linearizable and linearization-help free in the crash-stop model. Then $A$ is strict-linearizable in the individual crash-recovery model.
\end{corollary}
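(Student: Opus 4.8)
The plan is to derive Corollary~\ref{lm:prefix} as an almost immediate consequence of Lemma~\ref{lm:prefix-strict}, using the contrapositive. Suppose, for contradiction, that $A$ is \emph{not} strict-linearizable in the individual crash-recovery model. Since $A$ is linearizable in the crash-stop model, it admits \emph{some} linearization function; the point of the argument is that the hypothesis ``linearization-help free'' must be witnessed by a \emph{prefix-respecting} linearization function if it is to be meaningful here, so the first thing I would do is make precise which linearization function we are working with. The cleanest route is: by hypothesis $A$ is linearization-help free, and in this section (as stated in the text preceding the definition of prefix-respecting) we have restricted attention to linearization-helping defined via prefix-respecting linearization functions. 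So fix a prefix-respecting linearization function $f$ for $A$ witnessing linearization-help freedom.

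Now apply Lemma~\ref{lm:prefix-strict}: its hypotheses are exactly that $A$ is obstruction-free and of a total object type, and that $A$ is not strict-linearizable in the individual crash-recovery model — all of which we have. The conclusion of the lemma is that \emph{every} prefix-respecting linearization function of $A$ implies linearization-helping in the crash-stop model. In particular the $f$ we fixed implies linearization-helping, contradicting the assumption that $f$ witnesses linearization-help freedom. Hence $A$ must be strict-linearizable in the individual crash-recovery model, which is the claim.

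The only real subtlety — and the step I would be most careful about — is the bookkeeping around which notion of linearization-help freedom is in force. The definition in \S\ref{sec:help} quantifies over the existence of \emph{some} linearization function, whereas Lemma~\ref{lm:prefix-strict} talks about \emph{all} prefix-respecting ones; the bridge is the standing convention of \S\ref{sec:HelpFree-SL} that we only consider prefix-respecting linearization functions when discussing linearization-helping in this section. I would state explicitly that ``$A$ is linearization-help free'' here means ``there exists a \emph{prefix-respecting} linearization function $f$ of $A$ that witnesses linearization-help freedom,'' so that the contradiction with Lemma~\ref{lm:prefix-strict} is clean. No new construction or calculation is needed; the proof is a two-line contrapositive once this convention is spelled out.

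\begin{proof}
Assume towards a contradiction that $A$ is not strict-linearizable in the individual crash-recovery model. Since $A$ is linearization-help free in the crash-stop model (and, by the convention of this section, this is witnessed by a prefix-respecting linearization function), there exists a prefix-respecting linearization function $f$ of $A$ with respect to which $A$ is linearization-help free. However, $A$ is obstruction-free, $\tau$ is total, and $A$ is not strict-linearizable in the individual crash-recovery model, so by Lemma~\ref{lm:prefix-strict} the prefix-respecting linearization function $f$ implies linearization-helping in the crash-stop model, contradicting the choice of $f$. Hence $A$ is strict-linearizable in the individual crash-recovery model.
\end{proof}
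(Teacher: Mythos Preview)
Your proposal is correct and is exactly the intended argument: the paper states this as a corollary immediately after Lemma~\ref{lm:prefix-strict} without a separate proof, and the derivation is precisely the contrapositive you give. Your explicit handling of the convention that linearization-help freedom is witnessed by a prefix-respecting linearization function (the standing restriction of this subsection) is the right way to make the two-line argument clean.
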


%\section{Does Strict-linearizability imply Help-freedom?}
\section{Strict-Linearizability vs. Universal-helping}
\label{sec:SL-UH}
%
%This section consider the new identifier crash-recovery model.

%\subsection{Universal-Helping vs. Linearization-Based Helping}
%Universal helping and linearization-based helping are incomparable. 
%\subsection{Strict-Linearizability vs. Universal-helping}

In this section we prove that strict-linearizability and universal-helping are independent. We then prove that for a large class of object types, any strict-linearizable implementation in the crash-recovery model is universal-help free in the crash-stop model.

\begin{lemma}
    \label{lm:SL-vs-UH}
    Strict-linearizability and universal-helping are independent
\end{lemma}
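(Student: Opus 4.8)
The plan is to read ``independent'' exactly as it was read for the analogous statement about linearization-helping: I want to exhibit two implementations showing that neither direction of implication holds --- one that is strict-linearizable yet has universal-helping, ruling out ``strict-linearizable $\Rightarrow$ universal-help free'', and one that is universal-help free yet not strict-linearizable, ruling out ``universal-help free $\Rightarrow$ strict-linearizable''. Both witnesses are already in hand, so the argument is a short reduction to earlier claims. Following the pattern of the preceding independence lemma, I would also record witnesses for the two remaining combinations so that all four quadrants of (strict-linearizable or not) $\times$ (has universal-helping or universal-help free) are realized, but strictly speaking only the two quadrants above are needed.

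The first essential witness is Claim~\ref{cl:help1}: the $k$-bounded counter of Algorithm~\ref{alg:FAI} is wait-free, strict-linearizable in the individual crash-recovery model, and --- via the linearization function there that linearizes every pending operation immediately after \countt\ reaches $k$ --- has universal-helping in the crash-stop model. The second essential witness is Claim~\ref{cl:sticky}: the sticky-bit of Algorithm~\ref{alg-Sbit} is wait-free, linearizable, universal-help free in the crash-stop model, and not strict-linearizable; by the remark following that claim its non-strict-linearizability holds in every crash-recovery variant, in particular the individual model in which Claim~\ref{cl:help1}'s positive witness lives, so both statements refer to the same model. For completeness I would add the two easy quadrants. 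For (strict-linearizable, universal-help free): a read/write register implemented directly from one read/write base object is strict-linearizable (each operation is linearized at its single base step) and universal-help free (a pending operation whose base step has not been taken has an undetermined response, so no single linearization of $E\cdot E'$ can be kept consistent across all extensions, whatever the threshold $t$). For (not strict-linearizable, has universal-helping): a non-blocking linearizable queue with universal-helping, e.g. the one in~\cite{DBLP:jattiya-help}, is not strict-linearizable --- crash a process right after it publishes an enqueue of $v$ but before the step that would linearize it, then let another process dequeue; the dequeue returning $v$ forces the enqueue into every valid linearization, hence after the crash.

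The main obstacle is bookkeeping rather than mathematics, and it concentrates on one point: the quantifier hidden in ``universal-help free'', which means that \emph{no} $t$-universal-helping holds for \emph{any} $t$. So in both the sticky-bit witness and the register witness I must make sure the bad execution I point to defeats every choice of $t$ --- which it does, because in each case the execution is an unbounded run of reads after the offending operation is published, so no amount of progress by other processes pins a linearization point on that operation. The remaining check is that the cited results line up in one fixed model, which they do: Claim~\ref{cl:help1} is already stated for the individual crash-recovery model and Claim~\ref{cl:sticky}'s impossibility propagates down to it, so no extra work is required.
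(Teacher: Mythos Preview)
Your proposal is correct and follows essentially the same approach as the paper: establish independence by exhibiting witnesses for all four quadrants of (strict-linearizable / not) $\times$ (universal-helping / universal-help free), leaning on Claim~\ref{cl:help1} and Claim~\ref{cl:sticky} for the two non-trivial corners. The only differences are your concrete choices for the remaining two quadrants --- you use a trivial read/write register where the paper points to Harris's linked list, and a universal-helping queue from~\cite{DBLP:jattiya-help} where the paper uses Herlihy's universal construction applied to a stack --- but these are interchangeable witnesses for the same argument.
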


\begin{proofsketch}
    We prove an implementation being strict-linearizable or not in the crash-recovery model is independent of it having or not having universal-helping in the crash-stop model. We do so by proving that for any of the four possible combinations there exists an implementation.

    \proofparagraph{Strict-linearizability + Universal-helping}
    Follows from Claim~\ref{cl:help1} and \ref{cl:DRAF}.
    
    \proofparagraph{Strict-linearizability + Universal-help free}
    This direction is intuitive, since naturally strict-linearizability seems to contradict universal-helping. Indeed, any strict-linearizable implementation we are familiar with is universal-help free in the crash-stop model. For example, consider the well-known Harris linked-list \cite{Harris01}. It is strict-linearizable in the crash-recovery model since any operation is linearized on a step by its owner process. However, it is universal-help free in the crash-stop model since processes do not help each other (except for physical removal of nodes), thus an insert operation will never be completed if its owner halts before adding the key to the list.
    
    \proofparagraph{not Strict-linearizable + Universal-helping}
    This direction is intuitive as well, since in most cases universal-helping contradicts strict-linearizability. Consider Herlihy  universal-construction \cite{Herlihy91} applied to a stack object type. It has universal-helping in the crash-stop model. However, it is not strict-linearizable in the crash-recovery model, since a process may crash after announcing a pop operation, and later that operation can be completed by a different process. This may result a return value of a push operation which was invoked after the crash. This implies the linearization point of the pop operation must be after the crash.

    \proofparagraph{not Strict-linearizable + Universal-help free}
    Follows from Claim~\ref{cl:sticky}
    
\end{proofsketch}

\subsection{Equivalence between Strict-linearizability and Universal-help Freedom}

\proofparagraph{Notation}
Given a sequential history $H$ and an operation $\pi$, we denote by \emph{$H+\pi$} the set of all sequential histories obtained by including $\pi$ (i.e., its invocation and response) in $H$. The same is defined for more than one operation.
Two sequential histories $H_1, H_2$ are \emph{distinct} if there exists an operation $\pi$ in both such that its response is different. If $\pi$ is in some (common) sub-history $H$, we say that $H_1, H_2$ are \emph{distinct for an operation in $H$}.

\begin{definition}[\textbf{Order-dependent type}]
\label{def:order-dependent}
An object type $\tau$ is \emph{order-dependent} if there exists an infinite sequential history $H$ and two operations $\pi_1, \pi_2$ such that the following holds:
\addtolength\leftmargini{1.5em}
\begin{enumerate}[label=\textbf{OD\arabic*:} ,ref=OD\arabic*]
    \item \label{def:od-adding-distinct}
    For any histories $H_1 \in (H+\pi_1) \cup (H+\pi_2)$,  $H_2 \in H+\pi_1+\pi_2$, any two histories in $\{H,H_1,H_2\}$ are distinct for some operation in $H$.
    %Notice: different pairs may be distinct for different operations in $H$.
    \item \label{def:od-order-distinct}
    $\pi_1 \cdot \pi_2 \cdot H$ and $\pi_2 \cdot \pi_1 \cdot H$ are distinct for some operation in $H$.
\end{enumerate}
\end{definition}
%
%\textcolor{blue}{Make clear references to claims and lemmas in appendix}
Order-dependant types include many known objects, such as queue and stack. In a nutshell, an order-dependent type have two operations such that adding exactly one of them or both changes the response of some other operation. Moreover, the order in which both operations are performed (starting from the initial configuration) effects the response of some other operation.

%Additionally, Lemma~\tsref{lm:exact-lemma} proves the existence of an exact-order type wait-free linearizable implementation using only read/write registers and $CAS$ that has both linearization-helping and universal-helping in the crash-stop model, but is also strict-linearizable in the individual crash-recovery model.

\begin{claim-subsection}
    A queue is an order-dependent type.
\end{claim-subsection}

\begin{proof}
    Let $\pi_1 = Enqueue(1)$, $\pi_2 = Enqueue(2)$, and $H$ be an infinite sequence of $Dequeue$ operations.
    Assuming the queue is initially empty, all operations in $H$ returns \textit{empty}. Adding only one of $\pi_1, \pi_2$ cause a single operation in $H$ to return a value different then \textit{empty}, while adding both effects the response of two operations in $H$.
    In addition, the first $Dequeue$ operation in $H$ returns different response in the histories $\pi_1 \cdot \pi_2 \cdot H$ and $\pi_2 \cdot \pi_1 \cdot H$.
\end{proof}

\begin{claim-subsection}
    A stack is an order-dependent type.
\end{claim-subsection}

\begin{proof}
    Let $\pi_1 = Push(1)$, $\pi_2 = Push(2)$, and $H$ be an infinite sequence of $Pop$ operations.
    Assuming the stack is initially empty, all operations in $H$ returns \textit{empty}. Adding only one of $\pi_1, \pi_2$ cause a single operation in $H$ to return a value different then \textit{empty}, while adding both effects the response of two operations in $H$.
    In addition, the first $Pop$ operation in $H$ returns different response in the histories $\pi_1 \cdot \pi_2 \cdot H$ and $\pi_2 \cdot \pi_1 \cdot H$.
\end{proof}

\subsubsection*{Separating Exact-Order and Order-Dependent types}
\label{sec:app}
Order-dependant are closely related to \emph{exact-order} types~\cite{help15}.
Roughly speaking, exact-order types are types in which operations are non-commutative, that is, switching the order of two operations changes the response of future operations. \cite{help15} proved that an exact-order type precludes a linearizable wait-free and linearization-help free implementation using read, write, compare-and-swap and fetch-and-add primitives. We next prove that order-dependant type and exact-order type are incomparable.
For convenience, we present the definition and results for exact-order type as appears in \cite{help15}. For more details, we refer the reader to \cite{help15}.

\proofparagraph{Notation}
Let $S$ be a sequential history. We denote by $S(n)$ the first $n$ operations in $S$, and by $S_n$ the $n$-th operation in $S$. We denote by $(S + \pi?)$ all sequential histories that contains $S$ and possibly also the operation $\pi$. In other words, $(S + \pi?)$ is a set of sequential histories that contains $S$, and also all sequential histories that are similar to $S$, except that a single operation $\pi$ is inserted in somewhere between (or before or after) the operations of $S$.

\begin{definition} [\textbf{Exact-Order Type.} Definition 4.1 \cite{help15}]
    \label{def:exact-order-type}
    An exact order type $\tau$ is a type for which there exists an operation $\pi$, an infinite sequential history of operations $W$, and a (finite or an infinite) sequential history of operations $R$, such that for every integer $n \geq 0$ there exists an integer $m \geq 1$, such that for at least one operation in $R(m)$, the result it returns in any history in $W(n+1) \cdot (R(m)+\pi?)$ differs from the result it returns in any history in $W(n) \cdot \pi \cdot (R(m)+W_{n+1}?)$.
\end{definition}

\begin{theorem} [Theorem 4.18 \cite{help15}]
    \label{the:help15}
    A wait-free linearizable implementation of an exact order type using read, write, compare-and-swap and fetch-and-add primitives cannot be linearization-help free.
\end{theorem}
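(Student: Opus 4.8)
The plan is to argue by contradiction, along the lines of the covering argument behind Theorem~4.18 of \cite{help15}. Assume $A$ is wait-free, linearizable, uses only read, write, compare-and-swap and fetch-and-add primitives, and is linearization-help free, witnessed by a linearization function $f$; let $(\pi,W,R)$ be the exact-order witness of $\tau$ from Definition~\ref{def:exact-order-type}. Dedicate a process $q$ to executing the writes $W_1,W_2,\ldots$ in order, a process $p$ to executing $\pi$, and a process $s$ to executing the operations of $R$. The target is, for a suitable $n$, two executions $E_1$ and $E_2$ that agree on $s$'s portion (so $s$ returns the same value on each operation of $R$) but whose $f$-linearizations $L_1,L_2$ order $\pi$ on opposite sides of $W_{n+1}$. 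Restricting $L_1,L_2$ to $\{W_i\}\cup\{\pi\}\cup R(m)$ then yields two sequential histories, one in $W(n)\cdot\pi\cdot(R(m)+W_{n+1}?)$ and one in $W(n+1)\cdot(R(m)+\pi?)$, that are both consistent with the sequential specification and, being equivalent to the common history of $E_1$ and $E_2$, assign each operation of $R(m)$ the same response --- contradicting the exact-order property at $n$, which forces some operation of $R(m)$ to return different values in those two families.

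To locate such $E_1,E_2$ I would first run $q$ solo to finish $W_1\cdots W_n$, obtaining a prefix $\beta$. Starting from $\beta$, letting $p$ complete $\pi$ solo and then $q$ complete $W_{n+1}$ solo forces $\pi\rightarrow W_{n+1}$, hence $\pi<_f W_{n+1}$ in every extension (so $\pi$ is decided before $W_{n+1}$); doing those two solo runs in the opposite order forces $W_{n+1}\rightarrow\pi$, hence $W_{n+1}$ is decided before $\pi$. Sliding $q$'s execution of $W_{n+1}$ through $p$'s execution of $\pi$ between these two extremes, the decided order of the pair flips, and --- using that $A$ is help free with respect to $f$, so (when neither operation is yet decided before the other) only a step of $p$ can make $\pi$ decided-before $W_{n+1}$ and only a step of $q$ can make $W_{n+1}$ decided-before $\pi$ --- one reaches a configuration $C$ at which $p$ is poised on a primitive step $s_p$ that decides $\pi$ before $W_{n+1}$ and $q$ is poised on a step $s_q$ that decides $W_{n+1}$ before $\pi$ (with both $\pi$ and $W_{n+1}$ already invoked at $C$).

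The punchline is then a case analysis on the base objects accessed by $s_p$ and $s_q$, which is exactly where the restriction to read/write/compare-and-swap/fetch-and-add is used. In the principal case $s_p$ and $s_q$ touch distinct base objects; then $C\cdot s_p\cdot s_q$ and $C\cdot s_q\cdot s_p$ reach the same configuration $C'$ with the same local states of $p$ and $q$, and we set $E_1=C\cdot s_p\cdot s_q\cdot\delta$ and $E_2=C\cdot s_q\cdot s_p\cdot\delta$, where $\delta$ lets $p$ finish $\pi$ solo, then $q$ finish $W_{n+1}$ solo, then $s$ run the first $m$ operations of $R$ solo ($m$ as promised by Definition~\ref{def:exact-order-type} at $n$; wait-freedom guarantees each returns a matching response). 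From $C'$ on, the two runs are identical, so they have the same history and $s$ returns the same value on each reader operation; but by the choice of $s_p,s_q$, $f(E_1)$ orders $\pi$ before $W_{n+1}$ while $f(E_2)$ orders $W_{n+1}$ before $\pi$, and also $W_1\cdots W_n$ precede both and $R(m)$ follows both, so the restricted linearizations have precisely the two shapes above and the argument of the first paragraph closes. The remaining cases, where $s_p$ and $s_q$ hit the same base object $b$, are dispatched by the familiar but delicate observation that for each primitive type either swapping the two steps still leaves $b$ and the relevant local states in a common state (e.g.\ when $s_p$ is a read, or one of the two is a failing compare-and-swap), or the deciding step can be displaced and the covering argument re-run so that the collision is avoided.

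The main obstacle is precisely this last point together with making the ``critical configuration $C$'' step rigorous: from an \emph{arbitrary} linearization function $f$, and from the single-step formulation of linearization-help freedom, one must extract a configuration and two poised steps with the stated deciding properties, and the same-base-object analysis must close off every way in which read/write/compare-and-swap/fetch-and-add could let the swap of $s_p$ and $s_q$ change the downstream memory (and hence $s$'s responses or $\pi$'s own response). Everything else --- termination of the reader block by wait-freedom, and matching the two restricted linearizations to the two ``worlds'' of Definition~\ref{def:exact-order-type} --- is routine.
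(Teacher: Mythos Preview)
The paper does not prove this theorem at all: it is quoted verbatim as Theorem~4.18 of \cite{help15} and used only as a black box (in the Linearization-Helping paragraph for the \draf\ object, to conclude that any wait-free implementation of an exact-order type from these primitives has linearization-helping). There is therefore no ``paper's own proof'' to compare your attempt against.

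That said, your sketch is essentially the argument of \cite{help15} itself: locate a critical configuration where each of the two contending processes has a poised step that flips the decided-before order of $\pi$ and $W_{n+1}$, then do a primitive-by-primitive case analysis to show that the two orderings of those steps lead to configurations indistinguishable to a fresh reader process, whose responses then contradict the exact-order witness. The high-level plan is right, and you correctly identify the two genuine obstacles: (i) extracting the critical configuration rigorously from the single-step formulation of help-freedom (the ``sliding'' argument needs a careful induction, not just an appeal to intermediate-value intuition, since decided-before is not a total order along the way), and (ii) the same-base-object case analysis, which for compare-and-swap and fetch-and-add is not just the classical read/write covering argument---two successful compare-and-swaps on the same object, or two fetch-and-adds, do not commute and do not leave either process blind to the swap, so \cite{help15} handles these by a more involved construction that iterates the critical-configuration search rather than by a one-shot swap. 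Your parenthetical ``the deciding step can be displaced and the covering argument re-run'' is pointing in the right direction, but as written it is a placeholder, not a proof; filling it in is the bulk of the work in \cite{help15}.
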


\paragraph{Order-Dependent does not imply Exact-Order}

A \drr\ type is a variant of a multi-reader-multi-writer register supporting exactly two writes. The sequential specification is as follows: a \writeop($v$) operation returns \ack; a \readop\ operation returns the pair of values written by the first two \writeop\ operations in the history (if such exists, otherwise a unique symbol $\bot$ is used). In other words, all \writeop\ operations after the first two writes do not effect any other operation.

\proofparagraph{Order-Dependent Type}
To satisfy the conditions of Definition \ref{def:order-dependent}, we define $H$ to be an infinite sequence of \readop\ operations, and $\pi_1,\pi_2$ to be $\writeop(1), \writeop(2)$ operations, respectively. The proof is simple and straightforward, and is left to the reader.

\proofparagraph{Exact-Order Type}
%In a Double-Write-Register, any two operations $\pi_1, \pi_2$ such that at least one is a read operation are commutative, i.e., performing $\pi_1 \cdot \pi_2$ or $\pi_2 \cdot \pi_1$ starting from any configuration does not effect any future operation response.
An exact-order type requires an infinite sequence of operations $W$, and an operation $\pi$, such that for any $i$, $\pi$ and $W_i$ are non-commutative after $W(i-1)$. However, this implies any operation in $W$ must be a \writeop\ operation, since \readop\ operation does not effect any other operation. As a result, after the history $W_1 \cdot W_2$ the value of the register is set, and no operation effects the result of any future operation. This contradicts the requirement of $\pi$ and $W_3$ being non-commutative after the history $W_1 \cdot W_2$.
Hence, no such sequence of operations $W$ exists, and a \drr\ is not an exact-order type.

%\myparagraph{Remark}
\paragraph{Exact-Order does not imply Order-Dependent}

We prove that the \texttt{Double-Read-And-Fix} object type presented in Section \tsref{sec:LH-SL} is an exact-order type but not an order-dependent type. By Claim~\ref{cl:DRAF} there exists an implementation of a \draf\ object using registers and \CAS\ only such that it is wait-free, linearizable and has both linearization-helping and universal-helping in the crash-stop model, while it is strict-linearizable in the crash-recovery model.
This proves that Lemma \ref{lm:oo-eq} does not apply to exact-order types. Notice that one can find a type for which Lemma \ref{lm:oo-eq} can be applied, although it is not order-dependent. Nonetheless, as \draf\ type demonstrate, defining the exact set of objects for which Lemma \ref{lm:oo-eq} can be applied is not a trivial task, as known sets of objects are not suffice. 

\proofparagraph{Exact-Order Type}
Let $W$ be an infinite sequence of alternating \writeop$(0)$ and \writeop$(1)$ operations. Let $\pi$ be a \writeop$(2)$ operation, and $R = \sigma$ a single \readop\ operation.
Consider a finite prefix of $W$, denoted by $H$. w.l.o.g. assume the last operation in $H$ is \writeop$(0)$ (the other case is symmetric). That is, the next operation in $W$ is \writeop$(1)$.

The value of the \draf\ object after $H$ is $\langle 0, 1 \rangle$.
Thus, $\sigma$ after $H \cdot \writeop(1)$ returns $\langle 1,0 \rangle$,
while $\sigma$ after $H \cdot \writeop(1) \cdot \pi$ returns $\langle 2,1 \rangle$.
On the other hand, $\sigma$ after $H \cdot \pi$ returns $\langle 2,0 \rangle$,
while $\sigma$ after $H \cdot \pi \cdot \writeop(1)$ returns $\langle 1,2 \rangle$.
In all cases, the response of $\sigma$ is different when extending $H$ by \writeop$(1)$ and when extending it by $\pi$.
This proves \draf\ is an exact-order type.

\proofparagraph{Order-Dependent Type}
Assume towards a contradiction \draf\ is an order-dependant type. Let $H$ be an infinite sequence of operations, $\pi_1, \pi_2$ two operations, as in Definition \ref{def:order-dependent}.
If $H$ contains a \readop\ operation, then adding either $\pi_1$ or $\pi_2$ after the first \readop\ does not effect any operation, thus contradicting condition 1 of Definition \ref{def:order-dependent}.
Otherwise, $H$ is a sequence of \writeop\ operations. However, \writeop\ always returns $ack$, thus condition 2 of Definition \ref{def:order-dependent} does not hold.
This is a contradiction.
Hence, \draf\ is not an order-dependent type.

%\ohad{Note: to add difference from other known collections: \\
%perturable - נbounded counter with read operation \\
%exact order type - very involved, but there exists an exact order type that is not %order-dependent. It has a strict-linearizable implementation which has universal-helping.
%}

\subsubsection*{Order-dependent type plus Strict-Linearizability implies Universal-help Freedom}
\begin{lemma}
\label{lm:oo-eq}
    Let $A$ be a non-blocking implementation of an order-dependent total type $\tau$, such that $A$ is strict-linearizable in the system-wide crash-recovery model. Then $A$ is linearizable and universal-help free in the crash-stop model.
\end{lemma}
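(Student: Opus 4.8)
## Proof Plan for Lemma~\ref{lm:oo-eq}

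\textbf{The overall strategy} is a proof by contradiction: assume $A$ is non-blocking, strict-linearizable in the system-wide crash-recovery model, realizes an order-dependent total type $\tau$, \emph{and} has $t$-universal-helping in the crash-stop model for some function $t$. I will then construct a single crash-recovery execution that cannot be strict-linearized, contradicting the hypothesis. The witness for order-dependence gives us an infinite sequential history $H$ and two operations $\pi_1,\pi_2$ satisfying \ref{def:od-adding-distinct} and \ref{def:od-order-distinct}. The plan is to have two distinguished processes, say $p_1$ and $p_2$, each invoke $\pi_1$ and $\pi_2$ respectively, take a single ``announcement'' step (recall the universal-helping definition assumes the first step of every operation publishes its signature), and then suffer a system-wide crash; after recovery, a third process $p_3$ runs solo and performs a long prefix of the operations of $H$ — long enough (at least $t(n)$ operations, and enough to expose the distinctions guaranteed by \ref{def:od-adding-distinct} and \ref{def:od-order-distinct}) that universal-helping forces $\pi_1$ and $\pi_2$ to be linearized into the history produced by $p_3$.

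\textbf{The key steps, in order.} First, set up the execution $E = E_{\mathrm{crash}} \cdot E'$: $E_{\mathrm{crash}}$ consists of $p_1$ invoking $\pi_1$ and taking its signature-announcing step, $p_2$ invoking $\pi_2$ and taking its signature-announcing step, followed by the system-wide crash $\bot_{\mathbb P}$; then $E'$ is a solo run of the (recovered) process $p_3$ executing operations of $H$ in order, run long enough that $p_3$ completes at least $t(n)$ operations whose invocations lie in $E'$ (possible by non-blocking and the total-type assumption, which guarantees $p_3$'s solo run never gets stuck). Second, invoke $t$-universal-helping: because $\pi_1$ and $\pi_2$ are incomplete in $E_{\mathrm{crash}}$ and $p_3$ completes $\ge t(n)$ operations in $E'$, there is a linearization $L$ of $E$ that contains both $\pi_1$ and $\pi_2$, and moreover this linearization is ``locked'' — for every extension $E''$, the linearization of $E\cdot E''$ extends $L$. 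Third, argue about where $\pi_1$ and $\pi_2$ sit in $L$: since $L$ contains all of $p_3$'s (completed, hence linearized) operations from $H$ and also $\pi_1,\pi_2$, the restriction of $L$ to $\{\pi_1,\pi_2\}\cup(\text{ops of }H\text{ in }E')$ is, up to the sequential specification, a history of the form ``$\pi_1,\pi_2$ interleaved into a prefix of $H$'' (there are no real-time constraints forcing $\pi_1,\pi_2$ after the $H$-operations, but even if $L$ places them after, the point is that some operation of $H$ that $p_3$ completed must see the effect). By \ref{def:od-adding-distinct}, whatever the relative placement of $\pi_1$ and $\pi_2$ among the $H$-operations (both before some $H$-operation, exactly one before, etc.), the responses $p_3$ actually returned in $E'$ pin down that both $\pi_1$ and $\pi_2$ are linearized \emph{before} some particular operation of $H$ — hence before some operation invoked entirely after the crash. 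Fourth, derive the contradiction with strict-linearizability: a strict completion of $E$ must linearize each crashed operation before the crash step $\bot_{\mathbb P}$ or drop it; but $L$ (and, by the locking property, every linearization consistent with $L$ in every extension) linearizes $\pi_1,\pi_2$ after $p_3$'s post-crash operations, hence after $\bot_{\mathbb P}$ — and we cannot drop them because they appear in $L$ and their removal would change the response $p_3$ returned, violating consistency with $\tau$. This contradicts strict-linearizability of $A$. Finally, linearizability of $A$ in the crash-stop model is immediate since $A$ has universal-helping, which presupposes $A$ is linearizable.

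\textbf{Where the subtlety lies.} The main obstacle is step three: carefully extracting, from the purely linearization-theoretic guarantee of universal-helping, the conclusion that $\pi_1$ and $\pi_2$ are both forced \emph{strictly before} an operation that was invoked after the crash. This is exactly where both conditions \ref{def:od-adding-distinct} and \ref{def:od-order-distinct} are needed: \ref{def:od-adding-distinct} handles the case analysis on how many of $\pi_1,\pi_2$ appear "early" in $L$ (the responses $p_3$ observed rule out the configurations where one or both come late), while \ref{def:od-order-distinct} is needed to handle the case where $L$ places both $\pi_1$ and $\pi_2$ before \emph{all} of $p_3$'s operations — then the relative order of $\pi_1,\pi_2$ is itself pinned by the first $H$-operation's response, and in particular that first $H$-operation (invoked after the crash) must be linearized after both, again contradicting strict-linearizability since $\pi_1,\pi_2$ are locked into positions after the crash. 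A secondary technical point is making sure $p_3$'s solo run really can be driven to complete $\ge t(n)$ operations and to expose the needed distinctions simultaneously; this follows from non-blocking together with totality of $\tau$ (so no operation of $H$ is undefined on any reachable state) and from the fact that $H$ is infinite, so we may take $p_3$'s run as long as needed.
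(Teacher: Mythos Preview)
Your argument has a genuine gap at the fourth step. In your single execution $E=E_{\mathrm{crash}}\cdot E'$ (both $p_1$ and $p_2$ announce, then a system-wide crash, then $p_3$ runs a prefix of $H$), there is \emph{no} contradiction with strict-linearizability: both $\pi_1$ and $\pi_2$ were invoked before the crash, so a strict completion may simply linearize both of them before $\bot_{\mathbb{P}}$ (and hence before every operation of $p_3$). Your step~3 in fact establishes exactly this---that $\pi_1,\pi_2$ sit \emph{before} some $H$-operation---and then step~4 contradicts it by asserting they sit \emph{after} $p_3$'s post-crash operations. Being ``locked into $L$'' by universal-helping says nothing about real-time placement; it only says they must appear in every linearization, and placing them before the crash satisfies that. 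Condition \ref{def:od-order-distinct} pins down their \emph{relative} order, but that order is perfectly compatible with both being pre-crash.

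The missing idea is the paper's second execution. After using your execution to conclude (via \ref{def:od-adding-distinct} and \ref{def:od-order-distinct}) that $p_3$'s responses force, say, $\pi_1\cdot\pi_2\cdot H$, the paper constructs a \emph{different} execution: $p_2$ announces $\pi_2$, then the system crashes, then $p_1$ announces $\pi_1$, then $p_3$ runs $H$. Because crashes are invisible, the shared state seen by $p_3$ is identical, so $p_3$ returns the same responses and the same order $\pi_1<\pi_2$ is forced. But now $\pi_1$ is invoked \emph{after} the crash, so any linearization respecting real time places $\pi_2$ after $\pi_1$, hence after the crash---contradicting strict-linearizability for the crashed $\pi_2$. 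This asymmetric placement of the crash is precisely what converts ``the relative order is forced'' into ``one of them must be linearized after its own crash.''

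A minor point: your justification of linearizability (``immediate since $A$ has universal-helping'') is circular, since universal-helping is the hypothesis you assume for contradiction. The correct argument, as in the paper, is that strict-linearizability restricted to crash-free executions is linearizability.
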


\begin{proof}
    Consider an implementation $A$ as in the lemma. Notice that $A$ is linearizable in the crash-stop model, since applying strict-linearizability correctness condition to crash-free executions only implies linearizability.
    
    Assume towards a contradiction $A$ has universal-helping. Since $\tau$ is an order-dependent type there exists an infinite history $H$ and two operations $\pi_1, \pi_2$ satisfying the conditions of Definition \tsref{def:order-dependent}.
    Consider a system with 3 processes $p_1, p_2, p_3$.
    Starting from the initial configuration, let $p_1$ invoke $\pi_1$ and publish its signature, followed by $p_2$ invoke and publish $\pi_2$ signature. Denote the resulted execution by $D$. From $D$ we let $p_3$ performs the sequence of operations $H$ in a solo-run manner. Denote the resulted execution by $E$. Since $A$ is total and non-blocking such an execution exists.
    In high-level, since $A$ has universal-helping, after a finite number of operations by $p_3$ both $\pi_1, \pi_2$ are linearized. Since $A$ is strict-linearizable, adding a system-wide crash at $D$ guarantees both operations must be linearized before the crash. Order-dependant ensures both operation must be linearized in a specific order. w.l.o.g. $\pi_1$ is linearized before $\pi_2$. We then consider a similar execution, in which $p_2$ publish its signature first and crash. For the same reasons, $\pi_2$ must be linearized after $\pi_1$, that is, after the crash of $p_2$, contradicting strict-linearizability. A detailed proof follows.

    %In a high-level, Since $p_3$ performs an infinite number of operations, by universal-helping after a finite number of its operations both $\pi_1,\pi_2$ must be linearized. Since linearizing both operations is distinct from linearizing exactly one operation or no operation at all, any linearization must linearize both operations. However, $A$ is strict linearizable, thus there exists a linearization in which both $\pi_1,\pi_2$ are linearized before the crash. w.l.o.g. we assume $\pi_1$ is linearized before $\pi_2$. We next consider a similar execution in which $p_2$ publish its signature before $p_1$ and there is a system-wide crash between the two publications. Hence, in any strict-linearization of the resulted execution, $\pi_1$ must be linearized before $\pi_2$, and in particular, $\pi_2$ is linearized after the crash of $p_2$, thus contradicting strict-linearizability.
    %A detailed proof follows.
    
    By our assumption, $A$ has universal-helping, thus after a finite number of complete operations by $p_3$ both $\pi_1,\pi_2$ have linearization points, that is, in $lin(E)$ both $\pi_1$ and $\pi_2$ are linearized.
    Consider an equivalent execution in the system-wide crash-recovery model, where the system crash at $D$. Notice this does not effect the steps of $p_3$. By abuse of notation, we denote both executions by $E$. Since $A$ is strict-linearizable, there exists a linearization $slin(E)$ in which each of $\pi_1,\pi_2$ is either linearized before the crash, or not linearized at all.
    By \ref{def:od-adding-distinct} both operations must be linearized in $slin(E)$, since they affected some operation in $H$. In other words, there exists no linearization of $E$ such that exactly one of $\pi_1,\pi_2$, or none of them, is linearized.
    Following \ref{def:od-order-distinct}, there exists a single linearization order for $\pi_1,\pi_2$ in $slin(E)$. This is so since a different order effects the response of some operation in $H$.  Assume w.l.o.g. $slin(E)= \pi_1 \cdot \pi_2 \cdot H$.
    
    Crashes are not visible to processes, thus we can construct a similar execution as follows: first $p_2$ invokes and publishes the signature of $\pi_2$, followed by a system crash. Then $p_1$ invokes $\pi_1$ and publishes its signature, followed by a solo run of $p_3$ executing operations according to $H$. Since $p_3$ is executing the same steps as in $E$, and following the same argument as above, in the resulted execution $\pi_2$ must be linearized after $\pi_1$ in \emph{any} linearization function. However, this implies $\pi_2$ must be linearized after the crash of $p_2$, thus contradicting strict-linearizability.

\end{proof}

%\ohad{to add a note - Lemma \ref{lm:oo-eq} holds also for the old identifier model}
%\textcolor{blue}{Also include a note to prove the corollary. Also the result for the individual crash-recovery model}
%
\begin{corollary}
\label{cor:oo-not-strict}
Let $A$ be a non-blocking implementation of an order-dependent total object $\tau$ such that $A$ has universal-helping in the crash-stop model. Then $A$ is not strict-linearizable in the system-wide crash-recovery model.
\end{corollary}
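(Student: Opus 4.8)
The plan is to obtain Corollary~\ref{cor:oo-not-strict} as the contrapositive of Lemma~\ref{lm:oo-eq}. Recall that Lemma~\ref{lm:oo-eq} asserts that every non-blocking implementation of an order-dependent total type that is strict-linearizable in the system-wide crash-recovery model is (linearizable and) universal-help free in the crash-stop model. The standing hypotheses of the corollary---$A$ non-blocking, $\tau$ order-dependent and total---are exactly the standing hypotheses of Lemma~\ref{lm:oo-eq}, so the only additional input is that $A$ has universal-helping in the crash-stop model, and the only thing to produce is the conclusion that $A$ is not strict-linearizable in the system-wide crash-recovery model.

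First I would argue by contradiction: assume $A$ \emph{is} strict-linearizable in the system-wide crash-recovery model. Then all hypotheses of Lemma~\ref{lm:oo-eq} are satisfied, so the lemma applies and yields in particular that $A$ is universal-help free in the crash-stop model. By the definition recalled in Section~\ref{sec:help}, ``$A$ is universal-help free'' means that $A$ does not have $t$-universal-helping for any $t$, which is precisely the negation of ``$A$ has universal-helping''. This contradicts the hypothesis of the corollary, so $A$ cannot be strict-linearizable in the system-wide crash-recovery model, as claimed.

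The one point to check is that the conclusion of Lemma~\ref{lm:oo-eq} and the hypothesis of the corollary are genuine negations of each other once the quantifier over the parameter $t$ is taken into account; this is immediate from the phrasing in Section~\ref{sec:help}, so there is no substantive obstacle beyond Lemma~\ref{lm:oo-eq} itself---the corollary is a logical rephrasing. For a self-contained proof one could instead replay the construction in the proof of Lemma~\ref{lm:oo-eq}: let $p_1,p_2$ publish the signatures of the order-dependent operations $\pi_1,\pi_2$, let $p_3$ execute the infinite witness history $H$ solo until universal-helping forces both $\pi_1$ and $\pi_2$ to be linearized, invoke \ref{def:od-adding-distinct} and \ref{def:od-order-distinct} to pin down a unique linearization order of $\pi_1,\pi_2$, and then insert a system-wide crash just after both signatures are published while swapping which process publishes first, forcing some operation's linearization point past the crash and violating strict-linearizability; but this merely re-derives Lemma~\ref{lm:oo-eq} and is redundant here.
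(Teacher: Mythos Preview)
Your proposal is correct and matches the paper's intent: the corollary is stated immediately after Lemma~\ref{lm:oo-eq} with no separate proof, precisely because it is the contrapositive of that lemma under the standing hypotheses. Your observation that ``universal-help free'' and ``has universal-helping'' are negations of each other is the only thing to check, and the paper's definitions make this immediate.
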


\begin{remark}
    An implementation $I$ that is not strict-linearizable in the system-wide new identifiers crash-recovery model imply $I$ is also not strict-linearizable in all other crash-recovery models - individual (system-wide) new (old) identifiers crash-recovery model.
    Therefore, Lemma \tsref{lm:oo-eq} and Corollary \tsref{cor:oo-not-strict} holds for all crash-recovery models.

    %The set of admissible executions of an implementation $A$ under the new identifiers system-wide crash-recovery model is a subset of the admissible executions of $A$ under any other crash-recovery model. Thus, if $A$ is not strict-linearizable in the new identifiers system-wide crash-recovery model it is also not strict-linearizable in any other crash-recovery model, and Lemma \tsref{lm:oo-eq} and Corollary \tsref{cor:oo-not-strict} holds for all crash-recovery models.
\end{remark}

\section{Strict-Linearizability vs. Valency-helping}
\label{sec:SL-VH}
%

%\textcolor{red}{How important is to state the liveness property here? Does the fact that is it a wait-free implementation that gives us the result have deeper implications? The liveness implications must be discussed in the last section in any case}

In this section we prove that strict-linearizability and valency-helping are not bonds together. That is, an implementation being strict-linearizable in the crash-recovery model is independent of it having valency-helping in the crash-stop model.
We note that we prove a slightly stronger result -- strict-linearizability can co-exist with valency-helping and linearization-helping, or with none of the helping definitions.

\begin{claim}
\label{cl:snapshot}
    There exists a strict-linearizable wait-free implementation $A$ of an object $\tau$ in the individual crash-recovery model, such that $A$ is linearizable and satisfies both linearization-helping and valency-helping in the crash-stop model.
\end{claim}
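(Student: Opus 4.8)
The plan is to instantiate $\tau$ as the atomic snapshot object --- \update\ writes to a process's own segment and returns \ack, \scan\ returns the vector of all segments --- and $A$ as the classical wait-free atomic snapshot algorithm of Afek et al.~\cite{AADGMS93}, augmented with a small modification whose sole purpose is to create valency-helping (plain atomic snapshot, as explained below, does not have it). Recall the structure of that algorithm: a \scan\ repeatedly collects all segments and returns as soon as two consecutive collects agree; if it ever observes that some process has performed two \update s since the \scan\ began, it instead ``borrows'' the view that that process embedded in its segment; an \update\ first performs an internal \scan\ and then writes its value together with that embedded view in a single step. It is standard that $A$ is wait-free and linearizable in the crash-stop model, with the canonical linearization function $f$: an \update\ is linearized at its write step; a \scan\ returning via a clean double collect is linearized at any point between the two matching collects; and a borrowing \scan\ is linearized at the linearization point of the internal \scan\ it borrowed. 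The one structural fact I would re-derive is that a borrowed internal \scan\ is linearized \emph{inside the borrower's own interval}: the second \update\ whose internal \scan\ is borrowed was invoked strictly after the borrowing \scan\ started and wrote before the borrowing \scan\ returned, so its internal \scan\ sits inside the borrower's interval. Consequently every operation of $A$ is linearized by $f$ at a step within its own execution interval.

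Strict-linearizability in the individual crash-recovery model then follows with little extra work, while linearizability in the crash-stop model is the classical correctness guarantee of the algorithm. For the former: if $p_i$ crashes in the middle of an operation $\pi$, either $\pi$'s linearization point has already occurred --- hence before the crash --- or it has not, and in the latter case $\pi$ has performed no write that is ever read by another operation (a \scan\ writes nothing; an \update\ that has not yet executed its single write is invisible, and the modification is designed to preserve this). Such a $\pi$ can be dropped from the strict completion without changing any other operation's response, so every admissible history of $A$ has a valid strict completion; dropping crashed operations is legitimate since crashes are invisible to non-crashed processes in the new-identifiers model.

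For linearization-helping I would exhibit the standard ``borrowing is helping'' execution: with processes $p_i, p_j, p_k$, run a \scan\ $\pi_1$ by $p_i$ concurrently with an \update\ $\pi_2$ by $p_j$ while $p_k$ performs two \update s, the second of which, $U_2$, embeds a view that $\pi_1$ will borrow. Interleave the events so that before $p_k$'s write of $U_2$ there is an extension realizing $\pi_1 <_{f} \pi_2$ and another realizing $\pi_2 <_{f} \pi_1$ (both are possible because $\pi_1$ has not yet committed to a return value), whereas $p_k$'s write of $U_2$ fixes $\pi_1$'s linearization point and hence its order relative to $\pi_2$; since this deciding step belongs to $p_k\neq p_i$ and, after choosing the interleaving appropriately, $\pi_1$ is the operation decided to come first, $A$ is not linearization-help free. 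For valency-helping I would exhibit a configuration $C$ in which a \scan\ $\pi$ by $p_i$ is multivalent but becomes univalent after a single write step of another process $p_j$: with only $p_i$ and $p_j$ active and every segment other than $p_j$'s untouched, let $C$ be the configuration just before $p_j$ performs its (in the modified algorithm, final and irrevocable) write to its segment, with $\pi$ not having read that segment yet; in $C$, $\pi$ may still return either value of $p_j$'s segment, whereas after $p_j$'s write the whole snapshot is frozen and $\pi$ is forced to return the unique surviving view.

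The main obstacle is precisely this last point. In the \emph{plain} atomic snapshot a \scan\ is always committed by one of its \emph{own} steps --- its winning double collect, or the read that picks up a borrowed view --- because as long as some \update\ can still be interleaved before the \scan\ returns, its outcome is not yet determined; a genuine foreign helping step exists only once one arranges that the helper's write permanently fixes every segment the \scan\ still has to read. This is why $A$ must be modified, e.g.\ so that the relevant segments behave as write-once (sticky) locations, or by pairing the snapshot with a ``gate''-style component in the spirit of Algorithm~\ref{alg:DRAF}; the technical heart of the claim is then to verify that this modification simultaneously (i) creates a valency-helping configuration, (ii) does not disturb wait-freedom or linearizability, (iii) keeps every operation linearizable inside its own interval so that strict-linearizability in the individual crash-recovery model is retained, and (iv) does not destroy the borrowing behaviour that supplies linearization-helping.
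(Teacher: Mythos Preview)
Your high-level plan---atomic snapshot plus a modification---is exactly the paper's approach, and your strict-linearizability argument is essentially the same. The divergence is in the modification and in the linearization-helping scenario, and there is a genuine gap in how these interact.

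The paper's modification is not write-once segments or a gate: each $\segment[i]$ is made append-only (it stores its entire write history), and a borrowing \scan\ deterministically picks the \emph{lowest} index $j$ with $b[j].\ts-c[j].\ts\ge 2$ and, within that segment's history, the \emph{earliest} triplet satisfying the inequality. This keeps the object type and the borrowing mechanism intact, so wait-freedom, linearizability, and the ``linearized within its own interval'' property carry over verbatim from the original proof. Your suggested modifications are heavier: write-once segments cap each process at a single \update, which kills the two-updates-then-borrow scenario you rely on for linearization-helping; a \draf-style gate has the same effect once it closes.

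More concretely, your linearization-helping argument does not go through even in the unmodified algorithm. You want $p_k$'s write of $U_2$ to decide $\pi_1$ (the \scan) before $\pi_2$ (the \update). But after that write there are still extensions in which $p_j$ completes $\pi_2$ and performs a second \update, or $p_k$ performs a third \update\ whose embedded view reflects $\pi_2$; $\pi_1$ may then borrow one of those later views and be linearized \emph{after} $\pi_2$. So $\pi_1$ is not decided before $\pi_2$ by $p_k$'s step alone. The paper runs the scenario in the other direction: a single write by $p_2$ decides $p_1$'s \emph{\update} before $p_3$'s \emph{\scan}, because after $p_2$'s write every view $p_3$ can possibly return already reflects $p_1$'s value, regardless of future updates. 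The paper's append-only/earliest-borrow rule is then exactly what makes the valency-helping configuration work (the second write of $p_1$ irrevocably fixes which view $p_3$ will borrow), without disturbing the borrowing behaviour needed for linearization-helping. That coupling---one light modification that simultaneously creates valency-helping and preserves the borrowing scenario---is the idea your proposal is missing.
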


\newcommand{\segment}{\textsf{Segment}}
\newcommand{\ts}{\textsf{ts}}
\newcommand{\data}{\textsf{data}}
\newcommand{\view}{\textsf{view}}

    The proof of the lemma involves constructing a modified version of the wait-free \snapshot\ implementation of Attiya et al.~\cite{AADGMS93}. For simplicity of presentation, we present the original snapshot implementation of \cite{AADGMS93} and then discuss the required modifications. The code appears in Algorithm~\tsref{alg:snapshot}.
    %Due to space constraints, we leave the proof constructions to Appendix~\tsref{sec:app3}.
    %We now present the proof arguments for snapshot implementation presented in Algorithm~\tsref{alg:snapshot}.
    
    \proofparagraph{Modifications}
    (i) Each shared register contains the entire history written to it. That is, whenever a process writes to a register \segment\ it simply appends the new value to the content of the register. Therefore, $\segment[i]$ contains a list of triplets  $\langle \ts, \data, \view \rangle$ written to it along the execution. Since all registers in the implementation are single-writer, this can be done in a safe and atomic manner.
    (ii) Any reference in the code to $\segment[j]$ simply refers to the last triplet in it, except for the following case:
    a \scan\ operation returning an indirect view in line~\ref{snapshot-undirected} chooses $j$ to be the lowest index for which $b[j].\ts-c[j].\ts \geq 2$, and it chooses the first view in $b[j]$ satisfying the former inequality (the first triplet in $b[j]$ satisfying it).
    
    The correctness proof for the new variant is similar to the proof of the original algorithm for both linearizability and wait-freedom and is left to the reader.
    %For lack of space, we leave to proof to the reader.
    %In the following we assume all initial values are 0.
    %We next prove the implementation is also strict-linearizable.
    
%\end{proof}

\begin{algorithm}[]
        
        \footnotesize
        
        \begin{multicols*}{2}
        
		\begin{flushleft}
		    \textbf{Shared variables:} \\
		        $\qquad \segment[i].\ts := 0$ \\ 
		        $\qquad \segment[i].\data := v_i$ \\
			    $\qquad \segment[i].\view := \langle v_0, \ldots, v_{n-1} \rangle$ \\
		\end{flushleft}
    
        \begin{procedure}[H]
			\caption{() \small $\update_{i}$ ($S, d$)}
			
			%$Segment[i] := \langle Segment[i].ts + 1, d, Segment[i].view \rangle$ \;
			$view := \scan()$ \;
			%$Segment[i].view := view$ \;
			$\segment[i] := \langle \segment[i].\ts + 1, d, view \rangle$ \;
		\end{procedure}
		
		\columnbreak
		
		\begin{procedure}[H]
			\caption{() \small $\scan_{i}$ ($S$)}
			
			\lFor {all $j \neq i$} {$c[j] := \segment[j]$}
			\While {\True} {
			   \lFor {all $j$} {$a[j] := \segment[j]$}
			   \lFor {all $j$} {$b[j] := \segment[j]$}
			   \uIf {for some $j \neq i$: $b[j].\ts - c[j].\ts \geq 2$} {
			        \KwRet $b[j].\view$ \label{snapshot-undirected}
			   }
			   \uElseIf {for all $j$: $a[j]=b[j]$} {
		            \KwRet $\langle b[0].\data, \ldots, b[n-1].\data \rangle$
			   }
			}
		\end{procedure}
		
		\end{multicols*}
		
		\caption{\texttt{Atomic Snapshot}. Code for process $p_i$}
		\label{alg:snapshot}
\end{algorithm}

\begin{proof}

    In the following we assume all initial values are 0.
    %We next prove the implementation is also strict-linearizable.
    
    \proofparagraph{Strict-Linearizability}
    \scan\ is a read-only operation, and thus clearly satisfy strict-linearizability -- a crashed \scan\ is considered to not have a linearization point.
    An \update\ operation is linearized when $p_i$ writes to $\segment[i]$. Hence, if a crash occurs before this write then the operation has no linearization point in any extension,
    otherwise it has a linearization point before the crash.
    
    \proofparagraph{Linearization-helping}
    Consider the following execution. Process $p_3$ invokes a \scan\ operation, collects the copy of $a$, starts collecting the copy of $b$ by reading $\segment[1]$, and halts. Process $p_1$ then completes $\update_{1}$ ($S,1$) operation, followed by $p_2$ invocation of $\update_{2}$ ($S,2$), where $p_2$ completes the \scan\ and halts just before writing to $\segment[2]$. Denote the resulted execution by $E$.
    
    Starting from $E$, a solo-run of $p_3$ completes collecting the copy $b$, and since $a = b$ (all values are the initial values), $p_3$ returns $\langle 0, 0, 0 \rangle$. This implies the \scan\ of $p_1$ is linearized before $\update_{1}$ of $p_1$.
    On the other hand, if $p_2$ takes its next step after $E$ and writes to $\segment[2]$, then any future \update\ operation will observe the $\update_{1}$ ($S,1$) operation. Thus, in any extension, $p_3$'s \scan\ will return a snapshot which observe $\update_{1}$ ($S,1$) or a later update by $p_1$. Hence, in any extension the linearization point of $\update_{1}$ ($S,1$) is before the \scan\ of $p_3$, i.e., $p_1$'s $\update_{1}$ operation is decided before $p_3$'s \scan\ operation due to a step by $p_2$. This proves linearization-helping.
    
    \proofparagraph{Valency-helping}
    To prove valency-helping, consider the following execution. Process $p_3$ executes a \scan\ operation and stops after collecting the first copy of $a$. Process $p_1$ then completes an $\update_{1}$ ($S,1$) operation, starts a new $\update_{1}$ ($S,2$) operation in which it completes the \scan\ and halts just before the write to $\segment[1]$. Denote the resulted execution by $E$, and the next step of $p_1$ by $e$.
    
    A step by $p_1$ results the execution $E \cdot e$, such that in any extension of it $p_3$'s \scan\ returns the view written by $p_1$ in the step $e$, that is, the view $\langle 1, 0, 0 \rangle$. The timestamp associated with this view is greater by 2 then then timestamp $p_1$ read and stored in $c[1]$. Following the selection rule for indirect view, even in case more \update\ operations are executed after this point, $p_3$ will always choose to return this view. This implies $p_3$'s \scan\ is univalent after the step $e$.
    On the other hand, if starting from $E$ $p_2$ completes an $\update_{2}$ ($S,1$) operation, followed by a solo run of $p_3$, the \scan\ operation completes and returns $\langle 1, 1, 0 \rangle$. This implies $p_3$'s \scan\ is multivalent in $E$. This proves valency-helping.
\end{proof}

\begin{remark}
    Algorithm \tsref{alg:snapshot} is also strict-linearizable in the more restricted system-wide crash-recovery model. Moreover, a careful analysis of the strict-linearizability proof suggests it also holds for the old identifiers crash-recovery model. Hence, Claim \tsref{cl:snapshot} holds for all crash-recovery models.
\end{remark}

\begin{lemma}
    \label{lm:SL-vs-VH}
    An implementation being strict-linearizable in the crash-recovery model is independent of it satisfying valency-helping in the crash-stop model.
\end{lemma}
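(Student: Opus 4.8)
The plan is to prove independence in the standard way used for the analogous Lemmas~\ref{lm:SL-vs-LH} and~\ref{lm:SL-vs-UH}: exhibit, for each of the four combinations of ``strict-linearizable in the crash-recovery model vs.\ not'' and ``has valency-helping in the crash-stop model vs.\ valency-help free'', an implementation realizing it. Two of the four are essentially already in hand. For \emph{strict-linearizable $+$ valency-helping} I would simply invoke Claim~\ref{cl:snapshot}: the modified \snapshot\ implementation is wait-free, strict-linearizable in the individual crash-recovery model, and has valency-helping (and, as shown there, also linearization-helping) in the crash-stop model; this also settles one half of the ``slightly stronger'' statement. For \emph{not strict-linearizable $+$ valency-helping} I would take the \sticky\ implementation $I$ of Claim~\ref{cl:sticky}, which is not strict-linearizable in the (system-wide) crash-recovery model, and note it has valency-helping: in a configuration where some \readopA\ is pending with $\vall=0$ the operation is multivalent, and a single \setop\ step of another process writing $\vall:=1$ makes it $1$-univalent, since $\vall$ never reverts to $0$.

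For \emph{strict-linearizable $+$ valency-help free} (in fact free of \emph{all} three helping notions, which discharges the other half of the stronger statement) I would use the trivial implementation of an atomic multi-reader single-writer register from one base register: \readop\ reads it, \writeop\ writes it. Every operation is linearized at its single base-object access, so the implementation is linearizable, strict-linearizable (a crashed \readop\ gets no linearization point; a crashed \writeop\ either performed its base write or not), and linearization- and universal-help free. It is valency-help free because a \writeop\ always returns \ack\ and is never multivalent, while a pending \readop\ can never be pinned by one step of another process: after any single step the writer can still issue further \writeop's, so the value the pending \readop\ will eventually return is never determined.

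The hard case is \emph{not strict-linearizable $+$ valency-help free}, and this is where I expect the real work to be. The obstacle is that every ``natural'' non-strict-linearizable implementation (the \sticky\ object above, Herlihy's universal construction, or Golab's register-from-registers construction) achieves its non-strict-linearizability precisely through a helping step that also fixes some pending read's return value, hence has valency-helping. To decouple the two phenomena I would use an operation whose return value is never in doubt. Concretely, consider a wait-free multi-reader single-writer register in which the writer $p_0$ performs $\writeop(v)$ by first announcing a counter-tagged value $\langle ts,v\rangle$ in an announce register and then writing it into the main register $M$, and in which there are two reader variants: \readopA, which just reads and returns $M$, and \readopB, which first propagates any announced-but-not-yet-installed $\langle ts,v\rangle$ into $M$ (via a \CAS, thereby helping a possibly-crashed writer) and then reads and returns $M$. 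Linearizing each \writeop\ at the first installation of its tag into $M$ and each read at its read of $M$ shows the implementation is linearizable.

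It then remains to argue the two properties of this construction. For non-strict-linearizability I would exhibit the execution in which $p_0$ announces $\writeop(1)$ and crashes before touching $M$; afterwards a \readopA\ by $p_1$ reads $M$ and returns $0$, and then a \readopB\ by $p_2$ installs $\langle 1,1\rangle$ into $M$ and returns $1$. Any strict completion must either drop $\writeop(1)$ (contradicting $p_2$'s response $1$) or place it before the crash and hence before both reads (contradicting $p_1$'s response $0$), so no strict completion exists. For valency-help freedom I would observe, exactly as in the register case, that a \writeop\ is never multivalent and that in the crash-stop model the writer never crashes and can always issue more \writeop's, so no single step of another process---including a helper's installation into $M$---ever pins a pending \readopA\ or \readopB. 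Assembling the four implementations proves the lemma and the stronger co-existence statement. The only delicate point is confirming that the helping reader in the fourth construction cannot, in some configuration, resolve the valency of a \emph{different} pending read, so the ``more writes are always possible'' argument must be stated carefully enough to cover that.
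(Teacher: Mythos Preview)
Your proposal is correct and in fact more thorough than the paper's own proofsketch, which exhibits only the two strict-linearizable cells: Claim~\ref{cl:snapshot} for strict-linearizable $+$ valency-helping (exactly as you do), and the Harris linked-list for strict-linearizable $+$ valency-help free. The paper does not address the two ``not strict-linearizable'' cells at all---its stated ``slightly stronger result'' is merely that strict-linearizability can co-exist with valency-helping and linearization-helping, or with none of the helping definitions. Your differences from the paper are thus: (i)~you fill in all four cells, reading ``independent'' as full logical independence, in line with the four-case treatment the paper gives for Lemma~\ref{lm:SL-vs-UH}; (ii)~for strict-linearizable $+$ valency-help free you use the trivial one-step register rather than Harris's list, which is cleaner since the paper's justification for Harris is only the informal remark that processes ``do not help each other (except for physical removal of nodes)''; (iii)~your announcing-writer register for the not-strict-linearizable $+$ valency-help-free cell is a new construction with no analogue in the paper, and your analysis of it is sound---the single-writer structure guarantees that after any step by another process the writer can still overwrite $M$ with a fresh value, so no pending read (of either flavour) is ever pinned, while the crashed-announcement execution forces the write's linearization point strictly after the crash. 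The only small technicality to add is that the value domain must have at least two elements so that ``the writer can always issue further \writeop's with a different value'' is genuinely available.
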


\begin{proofsketch}
    By Claim \tsref{cl:snapshot}, there exists a strict-linearizable implementation in the crash-recovery model that has also valency-helping in the crash-stop model.
    On the other hand, take for example the well-known Harris linked-list \cite{Harris01}. It is strict-linearizable in the crash-recovery model since any operation is linearized on a step by its owner process. However, it is valency-help free in the crash-stop model since processes do not help each other (except for physical removal of nodes). We note that Harris linked-list does not satisfy any of the helping definitions.
\end{proofsketch}
\section{Discussion}
\label{sec:disc}
%
%\myparagraph{The crash-recovery model}
%\myparagraph{Other correctness conditions}
The correctness condition considered in this paper is strict-linearizability. However, weaker conditions have been proposed for the crash-recovery model, including durable-linearizability~\cite{IzraelevitzMS16} and recoverable-linearizability~\cite{golab15}.
Helping mechanisms seemingly delineate linearizability from strict-linearizability since crashed operations may be linearized anytime in the future. However, this is not the case with durable- and recoverable-linearizability: a pending operation can be linearized anytime in the future, thus intuitively, it does not preclude any kind of helping.
Moreover, any linearizable implementation in the crash-stop model is also durable- and recoverable-linearizable in the new identifiers crash-recovery model.
Correctness conditions, such as detectability~\cite{QueueFriedman18} and nesting-safe-recoverable-linearizability~\cite{NestingSafe18} has been proposed for a different model in which processes are aware to crash events. More precisely, upon recovery from a crash, processes execute a special recovery function responsible to fix any inconsistencies in the data-structure before proceeding with their normal execution.

%These conditions allows a crashed operation to be linearized after it crashed, thus does not prohibits helping mechanisms. Moreover, any linearizable implementation in the crashed-stop model is also durable-linearizable and recoverable-linearizable in the new identifiers crash-recovery model. On the other hand, strict-linearizability does not allow a crashed operation to be linearized after it crashed, hence it relation to helping mechanisms is more involved.

The model presented in this paper is an abstract model in which all writes are immediately persistent. This is useful for exploring the limitations of the crash-recovery model, and derive lower-bounds and impossibility results. However, real-world machines introduce another layer of complexity, since caches are volatile. In such machines, a data that has been written to main memory but is yet to be persisted is lost in case of a crash. Therefore, there is a need to carefully and manually regulate eviction of cache lines to main memory in order to avoid critical data loss.
Izraelevitz et al.~\cite{IzraelevitzMS16} defined and studied such a model, called explicit epoch persistency. In addition, \cite{IzraelevitzMS16} proposed a general durability transformation to transform any implementation from the abstract model to the more realistic model with volatile cache. Although the transformation has been proven to satisfy durable-linearizability, a similar transformation can be used for strict-linearizability. Therefore, our algorithmic results holds also for the explicit epoch persistency model.

%\newpage
\bibliographystyle{abbrv}
\bibliography{references}

%\appendix
%\newpage
%\input{appendix-order-dependent}

\end{document}